\setlist{itemsep=.2em,topsep=.5em,parsep=.2em} 
\newcommand{\para}[1]{\medskip\noindent\textbf{#1}}
\newcommand{\parafirst}[1]{\noindent\textbf{#1}}
\newtheorem{definition}{Definition}[section]
\newtheorem{theorem}{Theorem}[section]
\newtheorem{corollary}[theorem]{Corollary}
\newtheorem{lemma}[theorem]{Lemma}
\newtheorem{fact}[theorem]{Fact}
\newcommand{\pCst}{12}
\newcommand{\pMax}{24}
\newcommand{\E}{\mathbb{E}}
\renewcommand{\Pr}{\mathbb{P}}
\newcommand{\set}[1]{\{#1\}}
\newcommand{\est}{\textsf{{Est}}}
\newcommand{\quantumLE}{$\mathsf{QuantumLE}$}
\newcommand{\quantumAgreement}{$\mathsf{QuantumAgreement}$}
\newcommand{\quantumRWLE}{$\mathsf{QuantumRWLE}$}
\newcommand{\quantumQWLE}{$\mathsf{QuantumQWLE}$}
\newcommand{\quantumGeneralLE}{$\mathsf{QuantumGeneralLE}$}
\newcommand{\eps}{\varepsilon}
\newcommand{\setup}{\mathsf{Setup}}
\newcommand{\checking}{\mathsf{Checking}}
\newcommand{\update}{\mathsf{Update}}
\newcommand{\Sc}{\mathsf{S}}
\newcommand{\Cc}{\mathsf{C}}
\newcommand{\Uc}{\mathsf{U}}
\newcommand{\send}{\mathsf{Send}}
\newcommand{\vac}{\bot}
\algrenewcommand\algorithmicindent{0.5em}
\def\elected{\mbox{\small ELECTED}}
\def\nonelected{\mbox{\small NON-ELECTED}}
\title{Quantum Communication Advantage \\ for Leader Election and Agreement
}
\author{Fabien Dufoulon\thanks{School of Computing and Communications, Lancaster University, Lancaster, UK.  Email: \href{mailto: f.dufoulon@lancaster.ac.uk}{\tt f.dufoulon@lancaster.ac.uk}.}
\and Frédéric Magniez\thanks{Université Paris Cité, CNRS, IRIF, Paris, France. Email:\href{mailto:frederic.magniez@irif.fr}{\tt frederic.magniez@irif.fr}. Research supported in part by the 
European QuantERA project QOPT (ERA-NET Cofund 2022-25) and
the French PEPR integrated project EPiQ (ANR-22-PETQ-0007).}
\and Gopal Pandurangan\thanks{Department of Computer Science, University of Houston, Houston, TX 77204, USA. Email: \href{mailto: gopal@cs.uh.edu}{\tt gopal@cs.uh.edu}. Supported in part by ARO Grant W911NF-231-0191 and NSF grant CCF-2402837.}}
\date{}
\begin{document}
\maketitle

\begin{abstract}
This work focuses on understanding the quantum \emph{message} complexity of two central problems in distributed computing, namely, leader election and agreement in synchronous message-passing communication networks. We show that \emph{quantum} communication gives an advantage for both problems by presenting  quantum distributed algorithms that significantly outperform their respective classical counterparts under various network topologies. 

While prior works have studied and analyzed  quantum distributed algorithms in the context of (improving) round complexity,  a key conceptual contribution of our work is positing a framework to design and analyze the \emph{message} complexity of  quantum distributed algorithms. We present and show how quantum algorithmic techniques such as Grover search, quantum counting, and quantum walks can make distributed algorithms significantly message-efficient.  

In particular, our leader election protocol for diameter-2 networks uses \emph{quantum walks} to achieve the improved message complexity. To the best of our knowledge, this is the first such application of quantum walks in distributed computing.
\end{abstract}

\section{Introduction}
\label{sec:intro}
\emph{Message complexity} is one of the  fundamental performance measures in distributed computing, defined as the total number of messages (typically, of small size, say $O(\log n)$ bits) exchanged by all nodes throughout the operation of the distributed algorithm. Hence, it determines the communication cost of the distributed algorithm, which crucially influences other performance measures such as \emph{round complexity} 
(another fundamental measure defined as the total number of rounds of the distributed algorithm), latency, energy consumption, etc.
These measures are essential for various applications, including distributed big data computing, ad hoc sensor networks, and blockchains \cite{dongarra,message-byzantine,saia,AMP18}.  Thus, keeping the message complexity as low as possible is essential.  Indeed, there has been extensive work  
to design distributed algorithms for various problems that \emph{minimize} message complexity even at the cost of increased running time or decreased quality of solution (e.g., approximate solution) (see e.g., \cite{tcssync,congested-podc2015,podc2021,itcs2024}). However, for many fundamental distributed computing problems such as leader election, there are \emph{tight} lower bounds on the message complexity. In this paper, we show that one can significantly breach the \emph{classical} message lower bounds by taking advantage of the power of \emph{quantum} communication.

\subsection{Context}
\parafirst{Leader election and agreement.}
Leader Election is a fundamental problem in distributed computing studied well for over five decades.
In leader election, a group of processors in a distributed communication network have to elect a unique leader among themselves, i.e., only one processor must output the decision that it is the leader, say, by changing a special \emph{status} component of its state to the value \emph{leader}~\cite{Lynch_1996_Book}, and the rest must change their status to \emph{non-leader}. The non-leader nodes need not know the identity of the leader. This {\em implicit} variant of leader election is quite standard (cf.~\cite{Lynch_1996_Book}), and has been extensively studied and has numerous applications (see e.g., \cite{Lynch_1996_Book,Tel_2000_Book,Kutten_2015_JACM,KPPRT15} and the references therein).\footnote{In the  explicit variant,   all the non-leaders must also know the identity of the unique leader. Clearly, this implies a lower bound of $\Omega(n)$ messages (even quantumly), where $n$ is the network size.} 

The complexity of leader election --- both its message and round complexities --- has been extensively studied both in general graphs as well as in special graph classes such as rings, complete networks, and diameter-2 networks, see e.g. \cite{Lynch_1996_Book, Peleg_1990, Santoro_2006_Book, Tel_2000_Book, KPPRT15, Kutten_2015_JACM,ChatterjeePR20}. Note that these works (as ours) assume the standard CONGEST model (cf. Section \ref{sec:model}) where message sizes are small, typically $O(\log n)$ bits.

In~\cite{Kutten_2015_JACM}, a \emph{tight}  bound of $\Theta(m)$ messages has been shown ($m$ is the number of edges in the network) on the message complexity of leader election that applied even to Monte Carlo randomized algorithms with (large-enough) constant success probability; this lower bound applies for graphs {\em that have diameter at least three}. 
On the other hand, for complete graphs (i.e., graphs of diameter one), there exists a tight bound\footnote{$\tilde{O}$ notation hides logarithmic (in $n$) factors.}   of $\tilde{\Theta}(\sqrt{n})$
on the message complexity of randomized leader election ($n$ is the number of nodes in the network)\cite{KPPRT15}. For diameter-2 networks, a tight bound
of $\Theta(n)$ on the message complexity of leader election has been proven~\cite{ChatterjeePR20}.
  The above three results~\cite{Kutten_2015_JACM,KPPRT15,ChatterjeePR20} fully characterize the message complexity of leader election vis-\`a-vis the graph diameter in the \emph{classical} distributed setting. 

  Another fundamental problem we consider in this paper is implicit agreement, where the goal is for a \emph{non-empty subset} of nodes to agree on a common input value that should be the input value of some node~\cite{AMP18}. Implicit agreement is a generalization of leader election and the fundamental agreement problem.\footnote{In agreement, 
  \emph{all} nodes should output a common value which should be the input value of some node.}  Note that it can be solved by electing a leader who can be the only node that outputs its value.

  In~\cite{AMP18}, a tight bound (both upper and lower) of  $\tilde{\Theta}(n^{1/2})$ 
  messages for agreement was shown. This bound, as other bounds stated above for leader election, assumes that nodes have access to (only) {\em private} random bits. 
 On the other hand, the above work showed that if nodes have access to an unbiased {\em global (shared)} random bits, then   implicit agreement can be solved using $\tilde{O}(n^{2/5})$ messages (in expectation).\footnote{The same benefit does not apply to leader election~\cite{AMP18}: even with access to shared randomness, $\tilde{\Omega}(n^{0.5})$ messages  (in expectation) are needed for any {\em leader election} algorithm that succeeds with constant probability.}

 All the above results apply to the classical distributed setting. We show that all the above message bounds can be significantly improved in the \emph{quantum}  distributed setting.

\para{Quantum distributed computing.}
The study of \emph{round complexity} --- designing distributed algorithms with low round complexity (i.e., fast algorithms) and showing round lower bounds ---  has been a major focus of distributed computing both classically and  quantumly.
One of the earliest studies of round complexity in quantum distributed computing in the synchronous message-passing model 
was  in the LOCAL model, where there is no bound on the message size per round \cite{DP08,GKM09}.   Separations between the computational powers (with respect to round complexity) of the classical and quantum versions of the model have been reported for some non ``natural'' problems~\cite{GKM09,GallNR19,Balliu25}, but other papers have also reported limited improvement for other problems (e.g., approximate graph coloring~\cite{Coiteux-RoyDGKG24}).

In the quantum setting, it was shown in~\cite{ElkinKNP14}  that the quantum CONGEST model 
is not more powerful than the classical CONGEST model for many important graph-theoretical problems. Nonetheless, it was later shown in~\cite{LM18} that computing a network's diameter can be solved faster in the quantum setting. Since then, other quantum speed-ups have been discovered, in particular, for subgraph detection~\cite{CFGLO22,ApeldoornV22,FraigniaudLMT24}.

When it comes to \emph{message complexity}, as in the case of multiparty quantum communication complexity, typically \emph{oblivious} communication is assumed. In oblivious multiparty communication, the communication pattern  is pre-determined at the start of the algorithm (i.e., for every pair of nodes what rounds they will respectively communicate in). 
As mentioned in~\cite{GallS22}, oblivious communication is assumed by all quantum multiparty protocols in the literature, up to our knowledge. 
In this model, quantum advantages are known but only in the context of communication complexity, where the task is to compute a function whose inputs are distributed among several players.
Then, the complexity usually scales at least linearly with the number of players and, in the quantum context, sometimes sublinearly with the local input size. On the contrary, in distributed computing, we aim for a complexity sublinear in the number of nodes or edges, and the local input size is often negligible. Moreover, the decision is usually local. 

Finally, we note that, in the quantum distributed setting, it seems also that only oblivious communications have been studied before. Moreover, all prior works (in both the LOCAL and CONGEST models) assume all nodes, in every round, communicate messages to all their neighbors, as they are concerned with round complexity only. However, as a result, these prior works have high message complexity (measured by the total number of quantum or classical bits exchanged).

 \subsection{Our Contributions}
 
  This work focuses on the message complexity of leader election and agreement in the \emph{quantum} distributed setting. Prior works have studied and analyzed  quantum distributed algorithms in the context of (improving) round complexity, this is the first known work that addresses improving the message complexity. 
  We posit a framework to design and analyze the \emph{message complexity} of  quantum distributed algorithms and present techniques to design such algorithms that are communication-efficient. 
  Using our framework and techniques, we show that \emph{quantum} communication gives a significant advantage by presenting quantum distributed algorithms that outperform their respective classical counterparts. 
  For all the classical results for leader election and agreement mentioned above, we design quantum distributed algorithms that significantly beat the respective classical message bounds. In particular, for leader election, the quantum message bounds give significant improvements for diameter-1 (complete graphs),  diameter-2, and general graphs (diameter-3 and beyond). 

\para{New model for quantum messaging and routing.} (cf. Section \ref{sec:qdc})
As previously discussed, quantum network communication is usually oblivious when one only cares about the round complexity. The situation is quite different when it comes to the message complexity. Indeed, for problems such as leader election and agreement, the existing randomized protocols heavily rely on a \emph{non-oblivious} choice of a pattern of communication, where each node decides its local communication pattern based on its random choices and previously received messages.

We adapt and extend this non-oblivious communication behavior to the quantum setting.
We restrict ourselves to the CONGEST model, where at each round, an edge between two nodes can carry at most $O(\log n)$ bits, where $n$ is the number of  nodes in the network.
We start by allowing the possibility of controlling the message's recipient quantumly by a quantum register that could be itself in superposition. By doing so, we import to the distributed setting the notion of superpositions of trajectories~\cite{ck19}, initially defined and studied for quantum Shannon theory.

More precisely, in the context of quantum communication, quantum Shannon theory has been considered in the case of superpositions of quantum channels~\cite{ck19}, where a particle is sent in a coherent superposition of two or more transmission links, which has been further modeled by the notion of routed quantum circuits~\cite{vkb21}. Similar notions of accesses in superposition have been considered for quantum random access memory (QRAM) and quantum random access gates (QRAG). This has been considered for quantum algorithms~\cite{Amb07}
and quantum programming languages~\cite{ACCRV23}.

\para{New framework.} (cf. Section \ref{sec:subroutines})
While prior works have studied and analyzed quantum distributed algorithms in the context of (improving) round complexity,  a key conceptual contribution of our paper is positing a framework to design and analyze the \emph{message} complexity of quantum distributed algorithms.
We present and show how quantum algorithmic techniques such as Grover search, quantum counting, and quantum walks can make distributed algorithms significantly message-efficient.

One novelty of this framework is the possibility of incorporating decentralized procedures. Indeed, in the context of round complexity, all quantum routines are coordinated by a leader. By adding this possibility in the checking procedure inside Grover search, we show that further speedup is possible.
Already, for diameter-2 networks, we could get a protocol with message complexity $\tilde{O}(n^{3/4})$, using two nested Grover searches, one being centralized and the other not.
Moreover, by adding a layer of \emph{quantum walks} to Grover search, we achieve the improved message complexity of $\tilde{O}(n^{2/3})$. To the best of our knowledge, this is the first such application of quantum walks in distributed computing.

\para{New results.}
Most of our results are for \emph{leader election}, over several network configurations, and \emph{without} prior shared randomness or quantum entanglement.

\begin{enumerate}
\item For \emph{complete networks} (cf. Section \ref{sec:leader-complete}), we present a quantum leader election protocol that, with high probability\footnote{Throughout, ``with high probability" means with probability at least $1-1/n^c$ for some constant $c$.}, elects a leader and has message complexity 
    $\tilde{O}(n^{1/3})$, beating the tight $\tilde{\Omega}(\sqrt{n})$ classical bound~\cite{KPPRT15,AMP18}. 
    
\item For \emph{diameter-2 networks} (cf. Section \ref{sec:implicitLED}), we present a quantum protocol with message complexity $\tilde{O}(n^{2/3})$, beating the tight $\tilde{\Omega}(n)$ classical bound~\cite{ChatterjeePR20}.

\item For \emph{arbitrary networks} (cf. Section \ref{sec:LEonExpander}), we first show that quantum leader election in networks with mixing time $\tau$ can be solved with message complexity $\tilde{O}(\tau^{5/3} n^{1/3})$. This result assumes that nodes have knowledge of $\tau$.
In particular, if the graph has small mixing time (or high conductance) such as an expander or an hypercube where $\tau = \tilde{O}(1)$, then the above bound implies a message complexity of $\tilde{O}(n^{1/3})$.

Second (cf. Section \ref{sec:leader-general}), we show that, for any graph with $m$ edges and $n$ nodes, leader election can be accomplished with message complexity 
    $\tilde{O}(\sqrt{mn})$, which beats the tight classical bound  of $\Omega(m)$~\cite{Kutten_2015_JACM}.
   
\item  Finally, we present a result for \emph{implicit agreement} in complete networks (cf. Section \ref{sec:agreement}), when nodes are allowed to share random bits. Indeed, in this setting and classically, agreement is known to admit more efficient solutions than leader election, which itself implies agreement.
We show that a similar phenomenon occurs quantumly, and present a quantum agreement protocol for  complete networks with expected message complexity $\tilde{O}(n^{1/5})$, which improves over the best-known classical bound of $\tilde{O}(n^{2/5})$~\cite{AMP18}. 
\end{enumerate}

\para{Key ideas.}
We would like to emphasize that our framework is not sufficient on its own to speed-up existing classical results. We had to redesign them to apply our framework and to get a quantum boosting. We review our main technical and conceptual ideas through simplified but representative sub-problems.

\paragraph{Leader election and handshake}
In the randomized setting, several protocols for leader elections are based on a solution to the simpler \emph{handshake} problem: Two nodes $u,v$ would like to find a common node $w$ to speak through. 

For the case of complete networks, the randomized approach of~\cite{KPPRT15} uses the birthday paradox: $u$ and $v$ select $\Theta(\sqrt{n})$ nodes and send them a message. Clearly they will identify a common vertex $w$ with high probably and a total message complexity of $\Theta(\sqrt{n})$.
This reminds us of the collision finding problem for which there exists an efficient (sequential) quantum algorithm~\cite{BHT98} beating algorithms based on the birthday paradox. In this problem, one has to find a duplicate in a random sequence of $n$ integers in $\{1,2,\ldots,n\}$. 

We extract the main idea of this quantum algorithm, and implement it for the leader election problem, or for the simple case of the handshake, as follows. First, we break the symmetry. The protocol has now two phases, one classical and one quantum. In the first and classical phase, $u$ and $v$ contacts $k$ nodes, even deterministically, using $k$ messages. In the second and quantum phase, $u$ and $v$ use our distributed version of Grover search to find a node contacted in the first phase using $O(\sqrt{n/k})$ messages. Letting $k=n^{1/3}$ leads to the message complexity of $O(n^{1/3})$. Then, additional logarithmic factors come in the complexity to ensure high probability of success, and to adapt this idea to the original leader election problem.

Once this idea has been captured, it is not too hard to adapt it to the case of arbitrary network with mixing time $\tau$ as in~\cite{KPPRT15}. Nonetheless, there is a technical subtlety: due to the centralization of one part of Grover search, we cannot just walk on the network graph. Instead, we have to decide in advance the sequence of random choices that the walk will make. This blows up the message complexity by a factor of $\tau$, due to the propagation of $\tau$ random decisions taken by the initiator of the walk.

Probably our most challenging algorithm is for graphs of diameter $2$. Again, in that case, a basic scenario consist of \emph{handshake}, but now there might be a single node connecting $u$ to $v$. So the classical algorithm has total message complexity $\Theta(n)$~\cite{ChatterjeePR20}. This reminds us of the element distinctness problem, which is basically the worst case of collision finding where there is a single duplicate. Still quantumly, one can (sequentially) find it faster than classically using a notion of quantum walk~\cite{Amb07}. This time, the walk is not on the network, but it is used locally by the node in charge to implement and speedup Grover search.
Nonetheless, the situation is much more complicated than both our quantum protocol for complete networks and the sequential quantum algorithm for element distinctness.
First, our protocol uses several nested Grover searches, inside the main one which itself take benefit of a (local) quantum walk. Second, one of the inner Grover searches is decentralized. Both the use of a decentralized procedure and of a quantum walk are new in the context of quantum distributed computing.

\paragraph{Leader election and tree merging}
Last, we give a final leader election algorithm, for general graphs. The algorithm is based on a (standard) technique of merging trees, but it is well-known this approach (and in fact, any leader election algorithm in general networks) must send $\Omega(m)$ messages in the classical setting, as shown by~\cite{Kutten_2015_JACM}. To obtain $o(m)$ quantum message complexity, we crucially leverage Grover search to decide which trees should be merged, and this is done far more message-efficiently (quadratically so, in fact) than can be achieved in the classical setting. More concretely, we use Grover Search to find edges that connect different (adjacent) trees.

Note that a similar but non-distributed approach was taken by \cite{DHHM06}, where they give a quantum algorithm for (sequentially) finding a (minimum) spanning tree in a graph efficiently. 

\paragraph{Implicit agreement}
Our final quantum protocol is for implicit agreement. Again, we use a technique that is new, as far as we know, in the context of quantum distributed computing, but standard in sequential quantum computing. This is a variant of Grover search for approximate counting~\cite{BHT98c}. Here also, the classical approach from~\cite{AMP18} needs to be redesigned. In particular, both (1) estimating how many nodes have a certain input (or vote), and (2) detecting when agreement is reached, are redesigned to leverage quantum subroutines and obtain quadratic factor improvements in the message complexity. 

Consider the case where a node $v$ wants to estimate the number of nodes with input (or vote) 1 within some $n$-sized universe up to some $cn$ additive error, for any $c < 1$ and with at least constant success probability. Then, $v$ must contact $\Omega(1/c^2)$ nodes (and send as many messages) to achieve this estimation in the classical setting\footnote{Indeed, distinguishing a uniform random bit from a $c$-biased random bit requires $\Theta(1/c^2)$ samples by information theory arguments. Then a reduction to counting can establish the claimed bound.}. On the other hand, approximate quantum counting achieves this estimation for $v$ using only $O(1/c)$ messages, but node $v$ has no access to the nodes that lead to the estimation.

As for detecting when agreement is reached, Grover search allows us (as with the handshake problem) to get a quadratic factor reduction in the message complexity when compared to the classical setting.

\section{Preliminaries}

\subsection{Distributed Computing Model} 
\label{sec:model}
We first formally describe a standard distributed computing model in the classical setting, namely the synchronous CONGEST message-passing model (e.g., see 
\cite{peleg}). The quantum version of the model will be described in \Cref{sec:qdc}.

We consider a network of $n$ nodes, 
represented as an undirected connected graph $G=(V,E)$. We consider three types of network topologies: (1) complete graphs,  
(2) diameter-2 networks, and (3) arbitrary networks (diameter 3 and beyond).

Each node 
runs an instance of the same distributed algorithm.
The computation advances in synchronous rounds where, in every round, nodes
can send messages, receive messages that were sent in the same round by
neighbors in $G$,
and perform some local computation.
In  the CONGEST model~\cite{peleg}, a node can send in each round at most one message of size $O(\log n)$ bits per edge.

All processors have access to a {\em private} unbiased random bits. For our algorithm of  \Cref{sec:agreement}, we also
allow processors access to a {\em global (shared)} random bits.
Also, we do not assume unique identities (using private randomness, nodes can generate
unique identifiers with high probability.) Finally, throughout the paper we assume all nodes know $n$, but our results hold also when nodes only know a polynomial upper bound on $n$. 

Messages are the only means of communication; in particular, nodes
cannot access the coin flips of other nodes, and do not share any memory.
Throughout this paper, we assume that all nodes are awake initially and 
simultaneously start executing the algorithm. 

 We note that initially nodes have knowledge only of themselves, in other words we assume the {\em clean network model} --- also called the {\em KT0 model} \cite{peleg} which is standard and  commonly used. Each node $v$ has $\deg(v)$ ports that it can use to communicate respectively with its $\deg(v)$ neighbors; each  port $p=(v,u)$ of $v$ is connected
 (exclusively) to a port $p'=(u,v)$ of its neighbor $u$. Finally, we denote by $N(v)$ the neighbors of $v$ in $G$.

\subsection{Problem Definitions}\label{sec:pbs}

We first study the fundamental leader election problem.

\para{Leader Election.}
Every node $u$ has a special variable $\texttt{status}_u$ that it can set
to a value in $\{\bot, \nonelected, \elected \}$; initially we assume
$\texttt{status}_u = \bot$.
An \emph{algorithm $A$ solves leader election in $T$ rounds}
if, from round $T$ on, exactly one node has its status set to $\elected$ 
while all other nodes are in state $\nonelected$. This is the requirement 
for standard (implicit) leader election.  

We also study a problem called implicit agreement, which is a generalization
of leader election and agreement, another fundamental problem \cite{AMP18}. 

\para{Implicit Agreement.}
Assume initially each node has an input value in $\{0, 1\}$. An {\em implicit agreement} holds when the nodes' final states are either all contained in $\{0, \bot\}$ or all contained in $\{1, \bot\}$, and at least one node has state other than $\bot$ (which should be the input value of some node), where $\bot$ denotes the `undecided' state. In other words, all the decided nodes must agree on the {\em same value which is an initial input value of some node} and there must be {\em at least one decided node} in the network.

\section{Non-Oblivious Quantum Distributed Computing}
\label{sec:qdc}
We now describe the model of non-oblivious distributed computing in the quantum setting, that
we introduce in this paper, and for which we provide quantum advantage.
Again, we consider a network of $n$ nodes, represented as an undirected connected graph $G=(V,E)$. 
Then, our model basically allows a node to select quantumly which nodes it communicate to --- that
is, its choice is in a quantum superposition --- instead of selecting via a random distribution as in the classical setting. Of course, the message itself may consists of quantum bits.

\subsection{Quantum Routing and Message Complexity}
At every time, the system's configuration can be in a superposition of every possible behavior of deterministic configurations. The transition from one configuration to another is done according to a unitary transformation, made of two steps: (1) Perform the same local unitary to each node on their local data, local memory, and reception/emission registers; (2) Send non-empty messages prepared in step (1), where, implicitly, a non-empty register selects a node to send a message.

The \emph{message complexity} $M$ of a distributed algorithm is simply the sum of the message complexities during each of its rounds.
Deterministically, a round of communication has \emph{message complexity} $C$ when the network carries at most $C$ messages of $O(\log(n))$ quantum bits in that round.
Quantumly, we extend this notion to a superposition of configurations.
A round of \emph{quantum} communication has \emph{message complexity} $M$ when the global state of the network 
is in a superposition of (deterministic) configurations with message complexity at most $M$.

A formal model for quantum routing, with an example of use, is given in \Cref{app:formal,app:example}. 

Finally, we point out that this paper constructs algorithms whose message complexity is a random variable due to initial random choices of the nodes, but whose round complexity is always bounded.

\subsection{Quantization of (Distributed) Algorithms}\label{sec:quantization}
Often, when using the quantum framework for Grover search, quantum amplification, and search via quantum walks, some inner procedures are described classically using deterministic or randomized algorithms.
This is for two reasons. First, such descriptions are more intuitive. Second, it is always possible to consider their quantum analogue with the same behavior, except that they are described by a unitary map that can be reversed. This is crucial when such a procedure needs to be boosted using one of the mentioned techniques above.

This approach is quite common in the literature of sequential quantum algorithms but may be less known in quantum distributed computing. We review and detail some of the arguments already presented in~\cite{LM18} in \Cref{app:quantization}. 

\begin{lemma}[Informal]
Let $A$ be a randomized or quantum distributed algorithm, possibly with intermediate measurements. Then, there is a quantum distributed procedure $B$, without any intermediate measurement, simulating $A$ with the same round and message complexities.
\end{lemma}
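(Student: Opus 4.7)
The plan is to prove this by the standard purification machinery from sequential quantum computing (coherent encoding of random bits plus deferred measurement), adapted carefully to the non-oblivious distributed setting of \Cref{sec:qdc} so that the per-round message count is preserved branch-by-branch. The statement is asserted to be informal, so the plan is to construct $B$ explicitly from $A$ and argue (i)~output equivalence in distribution, (ii)~round complexity unchanged, and (iii)~quantum message complexity bounded by the classical/quantum message complexity of $A$ under the superposition-of-configurations definition from \Cref{sec:qdc}.

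First I would set up the local quantization at each node. Any randomized or measurement-based local computation of $A$ at node $v$ in a single round can be written as a sequence of: (a) sampling fresh random bits, (b) unitary/classical operations on local registers, (c) intermediate measurements, and (d) preparation of outgoing messages together with a non-oblivious choice of recipient ports. Step (a) is replaced by preparing fresh ancilla qubits in the uniform superposition via Hadamards, so that tracing out these ancillas recovers the original random-bit distribution. Step (c) is replaced using the deferred-measurement principle: each intermediate measurement is replaced by a CNOT into a fresh ancilla (a ``measurement record''), and every subsequent operation that was classically controlled on the outcome becomes quantumly controlled on that ancilla. These ancillas are kept alive until the very end of the execution, at which point the final measurement of $B$ is performed on all relevant registers. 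This makes every local step of $B$ a unitary on an enlarged Hilbert space, yet, after final measurement, the joint distribution over (outputs, transcripts, coin flips) is identical to that of $A$.

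Next I would handle the communication layer, which is the distributed-specific part. In $A$, the decision of which port $p$ to send on, and what bits to send, is a (possibly randomized) function of the local state; in the purified $B$, this function becomes a unitary $U_v$ acting on the local state plus the port-selection and message registers. The resulting message register and port register are precisely the ``reception/emission registers'' of the model in \Cref{sec:qdc}, and the send step of the round consists of exactly the unitary delivery described there: a non-empty recipient register, possibly in superposition, triggers the transmission of the message qubits. Thus, one round of $A$ is simulated by one round of $B$, giving the same round complexity. For message complexity, fix any round. In $A$, on every classical branch (i.e., every assignment of the random bits used so far) the round carries at most $M_r$ messages. In $B$, the global state after the local unitary step is a superposition over exactly these branches (indexed by the values of the coin-qubit ancillas and deferred-measurement ancillas). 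Each such basis-branch is a deterministic configuration sending at most $M_r$ messages, so by the definition of quantum message complexity in \Cref{sec:qdc} the round has message complexity at most $M_r$, and summing over rounds preserves total message complexity.

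The main obstacle I anticipate is subtle and specific to the distributed purification: ensuring that when two incident ports $(u,v)$ and $(v,u)$ are in superposition of ``active/inactive,'' the coherent send-and-receive operation is genuinely unitary and consistent on both endpoints, without doubling any resource. This is already built into the formal routing model referenced in \Cref{app:formal,app:example}, so the argument reduces to checking that the recipient register produced by $U_v$ is compatible with the model's send unitary; in particular, the no-cloning/no-deletion constraints are respected because the message qubits are handed to the channel register rather than copied. A secondary technicality is that the ancillas holding deferred measurements and coin bits must not themselves be transmitted, since doing so could inflate message complexity; the construction keeps them strictly local, which is what makes the per-branch bound $M_r$ tight. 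Once these points are verified, the stated lemma follows, and (as in~\cite{LM18}) one obtains $B$ with the same round and message complexity as $A$ and no intermediate measurements.
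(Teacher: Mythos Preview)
Your proposal is correct and takes essentially the same approach as the paper: both rely on the standard purification of randomized/stochastic local steps into unitaries on enlarged registers, the deferred-measurement principle to eliminate intermediate measurements, and the register-swap model of communication from \Cref{app:formal} to keep round and per-branch message counts unchanged. Your write-up is in fact more detailed than the paper's sketch (which simply invokes~\cite{LM18}, Bennett-style reversibility, and deferred measurement), and your explicit branch-by-branch argument for message complexity under the superposition-of-configurations definition, together with the observation that coin and measurement ancillas stay local, cleanly fills in what the paper leaves implicit.
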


\section{Distributed Quantum Subroutines for Message Complexity}
\label{sec:subroutines}

We describe the main tools used to build our algorithms.
Since they are built upon sequential quantum algorithms, we first explain how we will distribute them. In particular, since we will allow decentralization, synchronization will be crucial. This justifies a particular attention on the predetermined upper bounds on our quantum procedures accordingly to the following definition.

\begin{definition}
Assuming that the network is initially synchronized,
we say that a distributed algorithm $A$ \emph{runs in $T$ rounds} when the computation of every node ends in at most $T$ rounds for every possible initial configuration.
\end{definition}

\subsection{Distributed Execution of a Sequential (Quantum) Algorithm}

The distributed execution of sequential algorithms is a standard technique for taking advantage of a distributed network. In the context of quantum computing, the sequential Grover search algorithm has been adapted to distributed computing~\cite{LM18} in a centralized way. More precisely, a designated node, say $v$, is in charge of simulating the sequential algorithm (over the distributed network) and whenever it needs some information from, or action to be taken by, other nodes in the network, $v$ contacts the appropriate nodes using standard distributed communication primitives: e.g., building a spanning tree and then using the upcast or downcast primitives, or when information can be aggregated, the more efficient convergecast and broadcast primitives. These standard primitives have an inherent round complexity at least proportional to the diameter of the graph (and possibly linear in the number of nodes), and a communication complexity usually proportional to the number of edges in the graph.

Since, in this work, we aim for \emph{sublinear} message complexity, we consider even more distributed scenarios. We replace the use of the above communication primitives with two types of subroutines, or a mixture of both:
(1) \emph{centralized}: when nodes collaborate with $v$ whenever they are requested to do so by $v$, using some (quantumly) truncated communication primitive;
(2) \emph{decentralized}: when nodes collaborate without being notified to do so, which is possible using a careful synchronization of the network.

Note that most of our algorithms will use only a centralized adaptation of some sequential quantum subroutines, except for our leader election algorithm in diameter-2 networks, which is based on search via quantum walks.

\subsection{Distributed Search and Counting: General Setting}\label{sec:searchsetting}
In the following, $X$ is a finite set,  $f\colon X\to \{0,1\}$, and $u$ is any fixed node of a network $G$. 
In addition, there is a distributed algorithm $\checking$ that enables $u$ to compute $f(x)$, for input $x$ known by $u$. 
For synchronization constraints, we will need an upper bound on the number of rounds used by $\checking$, for any input $x\in X$.

More formally, let $\ket{\psi}_G$  be the initial state of the network, which may result from some preliminary initialization steps, and that we assume without loss of generality to be a quantum state.
Given $x\in X$ in $u$, $\checking$ enables $u$ to compute $f(x)$ in time $T_\Cc$ with message complexity $M_\Cc$ as follows, where the subscript $u$ means that the register is local to $u$, and $G$ that it is global to the network:
$$\checking: \ket{x,0}_u \ket{\psi}_G\mapsto \ket{x,f(x)}_u\ket{\phi_x}_G.$$
As opposed to previous works such as~\cite{LM18}, the algorithm $\checking$ can be centralized or decentralized, as explained in the above section.

The purpose of the rest of this section is to provide quantum subroutines that either
find $x\in X$ such that $f(x)=1$ (assuming such elements exist),
or that estimate the number of such instances of $x$.
For that purpose, we will use parameters $t_f=|f^{-1}(1)|$ and $\eps_f=t_f/|X|$.

Last an example of application on the star graph of those routines is given in \Cref{ref:starexample}.

\subsection{Grover Search}

In this part, the purpose of $u$ is to find $x\in X$ such that $f(x)=1$ (assuming such elements exist).  

A classical strategy is to sample $x\in X$ and then check whether $f(x)=1$ using $\checking$. Then the success probability is $\eps_f$, which can be boosted to $(1-\alpha)$ with $\Theta( \log(\tfrac{1}{\alpha})/{\eps_f})$ iterations. 
When $\eps_f$ is not known but satisfies $\eps_f=0$ or $\eps_f\geq \eps$, where $0<\eps\leq 1$ is some input parameter, one can guarantee that after  $\Theta({\log(\frac{1}{\alpha})}/{\eps})$ iterations we can distinguish between the two cases with arbitrarily high success probability.
In the distributed setting, it just means that the round and message complexities of $\checking$ are multiplied by this number of iterations, since the sampling procedure is local to $u$.

But quantumly we can do quadratically better. The following is a distributed adaptation of the Grover Search algorithm~\cite{Grover96}, in the case of an unknown number of pre-images of $f$~\cite[Lemma 2]{bbht98}, to the distributed setting. 
Such an adaptation was firstly done in~\cite{LM18} for the round complexity.
Here we consider also the message complexity, and the possibility of $\checking$ to be decentralized as explained in the previous subsection.

\begin{theorem}[Distributed Grover Search]\label{dgs}
Let $f,u,\eps_f$, $\checking$, $T_\Cc$ and $M_\Cc$ be defined as in \Cref{sec:searchsetting}.

For any $\eps,\alpha>0$, there is a quantum distributed algorithm $\mathsf{GroverSearch}(\eps,\alpha)$
such that
\begin{enumerate}
\item $\mathsf{GroverSearch}(\eps,\alpha)$ runs in $O({\log(\frac{1}{\alpha})}\times\frac{T_\Cc}{\sqrt{\eps}})$ rounds with message complexity $O({\log(\frac{1}{\alpha})}\times\frac{M_\Cc}{\sqrt{\eps}})$;
\item $\mathsf{GroverSearch}(\eps,\alpha)$ returns to
$u$ some $x\in X$, which satisfies $f(x)=1$ with probability at least $1-\alpha$ when $\eps_f\geq \eps$.
\end{enumerate}
\end{theorem}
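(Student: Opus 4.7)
The plan is to lift the BBHT variant of Grover search with an unknown number of marked items to the distributed setting, exploiting two key facts: (i) since $\checking$ is a unitary, composing it with its inverse yields a phase oracle for $f$; (ii) the Grover diffusion acts on a register local to $u$, so it costs nothing in communication.

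First, I would build a phase oracle $O_f$ acting on the $X$-register of $u$: apply $\checking$ to obtain $\ket{x,f(x)}_u\ket{\phi_x}_G$; apply a local $Z$ on the $f(x)$ flag inside $u$ to produce the sign $(-1)^{f(x)}$; then apply $\checking^{-1}$ to uncompute both the flag and the global garbage, restoring the network to $\ket{\psi}_G$. By the quantization discussion of \Cref{sec:quantization}, the inverse $\checking^{-1}$ can be implemented within the same round and message budgets as $\checking$, so $O_f$ costs $2T_\Cc$ rounds and $2M_\Cc$ messages. The diffusion $D=2\ket{\psi_0}\bra{\psi_0}-I$ on the uniform superposition over $X$ lives entirely in $u$'s local register, hence a single Grover iteration $DO_f$ has the same asymptotic cost as one oracle call.

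Second, on top of these iterations I would run the BBHT wrapper: initialize $m=1$; pick $k$ uniformly in $\{0,\ldots,m-1\}$; apply $k$ Grover iterations to the uniform superposition over $X$; measure to obtain $x$; verify $f(x)$ by a final call to $\checking$. If verification fails, scale $m$ by a constant factor and repeat, capping $m$ at $\Theta(1/\sqrt{\eps})$. By the standard BBHT analysis, if $\eps_f\geq \eps$ then this wrapper finds a marked element with at least constant success probability, using a total of $O(1/\sqrt{\eps})$ Grover iterations in the worst case. Independent repetition $O(\log(1/\alpha))$ times then drives the failure probability below $\alpha$ and yields the claimed round and message bounds.

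The main obstacle will be synchronization in the decentralized regime, where nodes other than $u$ need not know which BBHT stage is currently running. I would resolve this by fixing a global schedule of length exactly $T_\Cc$ per oracle invocation and padding shorter branches with no-ops; because $T_\Cc$ and $M_\Cc$ are worst-case upper bounds uniform over all $x\in X$, this padding is free asymptotically. Reversing $\checking$ is implemented by running each round's local unitary backward in reverse temporal order, which keeps the different branches of the superposition aligned. Verifying carefully that this padding-and-reversal interacts correctly with the non-oblivious model of \Cref{sec:qdc}, in particular that the selection registers controlling message emission remain coherent through the inverse steps, is the technical heart of the proof.
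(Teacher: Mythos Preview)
Your proposal is correct and follows essentially the same approach as the paper: build the phase oracle as $\checking^{-1}\circ PF\circ\checking$, observe that the diffusion is local to $u$, and wrap with BBHT plus $O(\log(1/\alpha))$ independent repetitions. The only point the paper makes more explicit is the synchronization mechanism: rather than generic ``padding with no-ops,'' the network always executes the full maximal sequence of $\checking$/$\checking^{-1}$ pairs, and $u$ simply omits the local $PF$ and $D$ once it has decided to stop, so that each remaining $\checking^{-1}\circ\checking$ collapses to the identity---this is the concrete realization of your padding idea.
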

\begin{proof}
Grover Search consists consists in at most $\lfloor a \log (1/\alpha)\rfloor$ attempts, for some constant $a$, each of them being in fact the Grover algorithm. We now decompose the proof in 7 steps.

\emph{Overview of Grover algorithm:}
Grover Search consists in a random number $t$
of iterations of the Grover operator $R=D\times S_f$ on a starting state made of a uniform superposition over $X$. The unitary $R$ will operate as a rotation from the starting state toward a uniform solution of $f^{-1}(1)$. More precisely:
\begin{itemize}
\item Starting state: $\ket{s}=\tfrac{1}{\sqrt{|X|}}\sum_{x\in X} \ket{x}$;
\item Unitary map $D$: Reflection through the starting state $\ket{s}$;
\item Unitary map $S_f$: $\ket{x}\mapsto (-1)^{f(x)}\ket{x}$, which is also a reflection;
\item $t\leq \lfloor b \log (1/\alpha)\rfloor$, for some constant $b$.
\end{itemize}

\emph{Synchronization:}
Whereas the whole network knows $a,b,\eps,\alpha$, node $u$ cannot share either the number of attempts of Grover algorithm, or the number of iterations of $R$, to the whole network, since it would be too expensive in message complexity.
For the number of attempts, we just continue them until the limit is reached, even if $x$ such that $f(x)=1$ is found before. For the number of iterations, the network will also assume the worst possible value, and continue to apply $\checking$ in case of necessity. We will explain at the end of the proof how this can be done without affecting the global state of the network.

\emph{Global state of the network:} Before continuing, we remind that the global state of the network is $\ket{\psi}_G$, due to some possible initialization step (\Cref{sec:searchsetting}). So finally the initial state is $\ket{s}_u\ket{\psi}_G$. The reflections $D$ and $S_f$ are extended without taking into account the global state, even if the implementation of $S_f$ may use it as a catalyst. The crucial part is that is should come back to its initial state $\ket{\psi}_G$ so that it can be disregarded from the point of view of $u$, leading to a valid implementation of $R$.

\emph{Preparation of the starting state:}
Node $u$ creates the starting state $\ket{s}_u$ and is in charge of synchronizing the iterations of $R=D\times S_f$ to the starting state. 

\emph{Realization of $R$:}
Since $D$ is independent from $f$, its application is fully local, assuming that $u$ holds the state. For $S_f$, $u$ will take benefit of $\checking$ in order to decide whether the amplitude of $\ket{x}$ needs to be flipped. 

\emph{Realization of $S_f$:}
Assuming that $\checking$ has been quantized. It should realize a unitary acting on $x$ and an extra bit initially set to $0$ as following:
$$\checking : \ket{x,0}_u \ket{\psi}_G \mapsto \ket{x,f(x)}_u\ket{\phi_x}_G,$$
where we take into consideration that the state $\ket{\psi}_G$ of the network may change and depend on $x$ because of $\checking$.

Let us define the phase-flip unitary $\mathit{PF}:\ket{b}\mapsto (-1)^b\ket{b}$ on the bit encoding $f(x)$. 
Then 
$$PF\times\checking: \ket{x,0}_u \ket{\psi}_G \mapsto (-1)^{f(x)} \ket{x,f(x)}_u \ket{\phi_x}_G,$$
Observe that one could also realize $(\checking)^{-1}$, the inverse of $\checking$, which is well defined since $\checking$ is unitary. Indeed, given a sequence of operations realizing $\checking$, realizing its inverse can be done by doing the inverse of each operations in reverse order.
Therefore
$$ (\checking)^{-1}\times \mathit{PF} \times\checking: \ket{x,0}_u \ket{\psi}_G \mapsto (-1)^{f(x)} \ket{x,0}_u \ket{\psi}_G.$$

\emph{Iterating and ending:}
As we have explained above, $u$ can prepare the starting state $\ket{s}_u$, run $D$, then the network simulates $S_f$, then $u$ run $D$ again and so on. Finally, the process stops locally when $u$ decides to stop the number of alternations of $D$ and $S_f$, whereas the network still continues to assist $u$.
More precisely, the network goes through the steps of $S_f$ while $u$ simply neither does $D$ nor $\mathit{PF}$. In which case, the network transformation is the identity.
\end{proof}

\subsection{Quantum Counting}
It is also possible to have an extension of Grover search for speeding up the statistic approximate counting based on sampling. We proceed in two steps. The first one is an adaptation of~\cite{BHT98c} from the sequential setting to the distributed one.
\begin{theorem}[Distributed Quantum Counting]
Let $f,u,t_f$, $\checking$, $T_\Cc$ and $M_\Cc$ be defined as in \Cref{sec:searchsetting}.
Then there is a quantum algorithm $\mathsf{Count}(P)$  such that
\begin{enumerate}
\item $\mathsf{Count}(P)$ runs in $P\times T_\Cc$ rounds with message complexity $P\times M_\Cc$;
\item When  $P\geq 4$ and $t_f\leq |X|/2$: $u$ outputs $\tilde{t}_f$ which satisfies
$|t_f - \tilde{t}_f| < \frac{2 \pi}{P} \sqrt{t_f|X|} + \frac{\pi^2}{P^2} |X|$ with probability at least  $8/\pi^2$.
\end{enumerate}
\end{theorem}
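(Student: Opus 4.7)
The plan is to follow the quantum counting algorithm of~\cite{BHT98c}, adapting it to the distributed setting using the same centralization techniques developed in the proof of \Cref{dgs}. The key observation is that quantum counting is quantum phase estimation applied to the Grover iterate $R = D \times S_f$, starting from the uniform superposition $\ket{s}$ already constructed in \Cref{dgs}.

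I would first recall the spectral structure of $R$: on the two-dimensional invariant subspace containing $\ket{s}$, it acts as a rotation by angle $2\theta$ where $\sin^2 \theta = t_f/|X|$, so $\ket{s}$ decomposes into two eigenvectors of $R$ with eigenvalues $e^{\pm 2 i \theta}$. Standard phase estimation with $P$ queries then returns $\tilde{\theta}$ satisfying $|\theta - \tilde{\theta}| \leq \pi/P$ with probability at least $8/\pi^2$. Setting $\tilde{t}_f = |X| \sin^2 \tilde{\theta}$ and applying the elementary bound $|\sin^2 a - \sin^2 b| \leq 2|\sin a|\,|a - b| + (a-b)^2$ with $\sin \theta = \sqrt{t_f/|X|}$ (valid since $t_f \leq |X|/2$ forces $\theta \leq \pi/4$) yields exactly the claimed error $\tfrac{2\pi}{P}\sqrt{t_f |X|} + \tfrac{\pi^2}{P^2}|X|$.

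The distributed implementation mirrors the proof of \Cref{dgs} almost verbatim. Node $u$ holds a local control register of $O(\log P)$ qubits and drives the phase estimation circuit. The reflection $D$ remains entirely local to $u$. Each application of $S_f$ is realized by the compute--phase-flip--uncompute pattern $\checking^{-1} \circ \mathit{PF} \circ \checking$, exactly as in \Cref{dgs}, so that the global state $\ket{\psi}_G$ is used only catalytically and restored after each call. Summing the uses of $\checking$ across the controlled-$R^{2^k}$ factors gives $P \cdot T_\Cc$ rounds and $P \cdot M_\Cc$ messages in total, with network-wide synchronization again handled by having every node execute $\checking$ in lock-step for the full $P$-round budget, regardless of $u$'s local bookkeeping for the powers of $R$.

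The main obstacle I expect is verifying that the controlled applications of $R^{2^k}$ needed by phase estimation remain compatible with the catalytic use of $\ket{\psi}_G$. Because each $\checking^{-1} \circ \mathit{PF} \circ \checking$ block acts as the identity on the global register for every computational basis input $x$, it factors out of the control: both branches of the control qubit produce the same global transition on $\ket{\psi}_G$, so the controlled Grover iterate is fully coherent and phase estimation functions as in the sequential setting. Once this coherence check is in hand, both the resource bound and the accuracy guarantee of the theorem follow directly from the BHT analysis.
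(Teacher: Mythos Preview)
Your proposal is correct and follows essentially the same approach as the paper: simulate the sequential counting algorithm of~\cite{BHT98c} by running phase estimation on the Grover iterate $R$, with node $u$ handling the control register and the reflection $D$ locally while the network implements $S_f$ via $\checking$ exactly as in the proof of \Cref{dgs}. Your write-up is in fact more detailed than the paper's, which omits both the explicit derivation of the error bound and the coherence check for the controlled-$R$ applications that you correctly identify and resolve.
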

\begin{proof}
We simulate the sequential approach of \cite{BHT98c}, which uses as a subroutine the Grover operator $R$, see proof of \Cref{dgs}, together with a procedure called Phase Estimation.
Phase Estimation runs several iterations of $R$, chosen in superposition, on the same starting state as in the proof of~\Cref{dgs}.

Only the implementation of $R$ will be done using the network, and the rest of Phase Estimation will be done locally by $u$. Then $R$ can be iterated a number of times solely controlled by $u$, even using the network for the procedure $\checking$, required to implement $S_f$. 
This part is similar to the proof of~\Cref{dgs}.
\end{proof}

We now state a useful direct corollary.
\begin{corollary}
\label{cor:approxCount}
Let $f,u,t_f$, $\checking$, $T_\Cc$ and $M_\Cc$ be defined as in \Cref{sec:searchsetting}.
For any $\alpha,c>0$, there is a quantum algorithm $\mathsf{ApproxCount}(c,\alpha)$ such that
\begin{enumerate}
\item $\mathsf{ApproxCount}(c,\alpha)$ runs in $O({\log(\frac{1}{\alpha})}\times \frac{T_\Cc}{c})$ round with message complexities $O({\log(\frac{1}{\alpha})}\times \frac{M_\Cc}{c})$;
\item $u$ outputs $\tilde{t}_f$ which satisfies
$|t_f - \tilde{t}_f| <  c|X|$ with probability at least   $1-\alpha$.
\end{enumerate}
\end{corollary}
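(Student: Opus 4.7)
The plan is to obtain the corollary as a direct quantitative consequence of the Distributed Quantum Counting theorem, by instantiating $P$ to meet the target additive error and then amplifying the constant success probability to $1-\alpha$ via a median-of-means argument.

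First I would set $P = \lceil K/c \rceil$ for a sufficiently large absolute constant $K$ (roughly $K \approx 4\pi + 2\pi^2$ suffices). Since the hypothesis of the counting theorem requires $t_f \leq |X|/2$, we have $\sqrt{t_f|X|} \leq |X|/\sqrt{2}$, so the additive error bound becomes
\[
|t_f - \tilde{t}_f| < \frac{2\pi}{P}\cdot\frac{|X|}{\sqrt{2}} + \frac{\pi^2}{P^2}|X| \leq c\,|X|,
\]
by the choice of $K$. This single invocation costs $P\cdot T_\Cc = O(T_\Cc/c)$ rounds and $O(M_\Cc/c)$ messages, and succeeds with probability at least $8/\pi^2 > 1/2$. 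If $c \geq 1$ the corollary is trivial by outputting $\tilde{t}_f = 0$, so we may assume $c < 1$.

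To upgrade to confidence $1-\alpha$, I would quantize the $\mathsf{Count}$ procedure (via the lemma of \Cref{sec:quantization}), run $r = \Theta(\log(1/\alpha))$ fresh independent copies sequentially, and have $u$ output the \emph{median} of the $r$ returned values. A standard Chernoff bound shows that with probability at least $1-\alpha$, more than half of the trials produce an estimate within $c|X|$ of $t_f$, and the median of such a collection necessarily lies in the same window $(t_f - c|X|, t_f + c|X|)$. Composing costs sequentially yields exactly the claimed bounds $O(\log(1/\alpha)\cdot T_\Cc/c)$ rounds and $O(\log(1/\alpha)\cdot M_\Cc/c)$ messages.

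The only subtle point — and what I expect to be the main obstacle to state cleanly — is the hypothesis $t_f \leq |X|/2$ in the counting theorem, which need not hold a priori. I would handle this symmetrically: run the above procedure in parallel (sequentially, with the same asymptotic complexity) on $f$ and on $f' = 1 - f$, noting that a $\checking$ circuit for $f'$ is obtained by flipping the output bit of $\checking$ for $f$, at no extra round or message cost. Since $\min(t_f, t_{f'}) \leq |X|/2$, at least one branch produces an $O(c)$-accurate estimate, and $u$ can select the valid one by checking which returned value is at most $|X|/2 + c|X|$ (or by combining the two estimates through their sum). This keeps the complexity within the stated asymptotic bounds while ensuring the guarantee holds for all $t_f$.
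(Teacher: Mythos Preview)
Your first two steps---choosing $P = \Theta(1/c)$ so that a single call to $\mathsf{Count}$ achieves additive error $< c|X|$ with probability $8/\pi^2$, then taking the median of $\Theta(\log(1/\alpha))$ independent runs to boost the confidence---match the paper's proof exactly.

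The divergence, and the only real issue, is how you discharge the hypothesis $t_f \leq |X|/2$. The paper does this by \emph{padding}: it replaces $X$ by a set $X'$ of size $2|X|$ and extends $f$ to $g$ by zero on the new half, so that $t_g = t_f \leq |X| = |X'|/2$ automatically; the $\checking$ routine for $g$ costs the same as for $f$ (return $0$ locally on the new elements), and one just doubles $P$ to compensate for the doubled domain.

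Your alternative---run on both $f$ and $1-f$ and then select---has a gap in the selection step. The Distributed Quantum Counting theorem, as stated in the paper, gives \emph{no} guarantee on the output when the hypothesis $t \leq |X|/2$ fails: the estimate returned by the invalid branch is simply unconstrained. Nothing prevents that estimate from also being at most $|X|/2 + c|X|$, so your criterion ``pick the branch whose returned value is at most $|X|/2 + c|X|$'' need not isolate the valid branch; and ``combining the two estimates through their sum'' does not help either, since knowing $t_f + t_{f'} = |X|$ tells you nothing about which of $\tilde t_f$ and $\tilde t_{f'}$ is the reliable one. The paper's padding trick sidesteps this problem entirely and is simpler.
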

\begin{proof}
When $t_f\leq |X|/2$, and $P=4\pi/c$, the approximation error is at most
$c/2\sqrt{t_f|X|}/2+c^2 |X|/16\leq c |X|$, with probability at least $8/\pi^2$.
In order to boost the probability to $1-\alpha$, one only need to compute the median of the  outputs of $\log(\frac{1}{\alpha})$ runs of that procedure.

In the general case, we don't know whether $t_f\leq |X|/2$ or not. Nonetheless we can transform the function so that this hypothesis holds. To simplify the discussion, assume that $X=[N]$. Then, we simply consider another function $g:[2N]\to\{0,1\}$ which coincides with $f$ on $[N]$ and takes the value otherwise. This function also satisfies $t_g=|g^{-1}(1)|=t_f$, but now $t_g\leq N=(2N)/2$. So we can apply the previous approach with $P=8\pi/c$ instead.
\end{proof}

\subsection{Distributed Search via Quantum Walks}

We revisit distributed Grover search using quantum walks as in the framework of~\cite{mnrs}, that we adapt to distributed computing.
We consider an irreducible and reversible Markov chain $P=(p_{xy})_{x,y\in X}$ on $X$, with stationary distribution $\pi$ and eigenvalue gap $\delta$. In this context we now define $\eps_f=\Pr_{x\sim \pi}(f(x)=1))$.

We also have two new procedures $\setup$ and $\update$, which prepares the network before using $\checking$. In the sequential setting, they are used to maintain a database used by $\checking$. For us, this is like a distributed database. Assuming that initially the network is in state $\ket{\psi}_G$, they act as follows:
\begin{itemize}
\item $\setup$: Given $x\in X$ in $u$, it maps the current network state $\ket{\psi}_G$ to a new state $\ket{\phi_x}_G$ that can depend on $x$ in time $T_\Sc$ with message complexity $M_\Sc$:
$$\setup: \ket{x}_u \ket{\psi}_G\mapsto \ket{x}_u\ket{\phi_x}_G;$$
\item $\update$: Given $x,y\in X$ such that $p_{xy}\neq 0$, it  updates the network state for $x$ to one for $y$ in time $T_\Uc$ with message complexity $M_\Uc$:
$$\update: \ket{x,y}_u \ket{\phi_x}_G\mapsto \ket{x,y}_u\ket{\phi_y}_G;$$
\item $\checking$: Given $x\in X$ and network state $\ket{\phi_x}_G$,
$\checking$ computes $f(x)$ in time $T_\Cc$ with message complexity $M_\Cc$:
$$\checking: \ket{x,0}_u \ket{\phi_x}_G\mapsto \ket{x,f(x)}_u\ket{\phi_x'}_G.$$
\end{itemize}

\begin{theorem}[Distributed search via quantum walk]\label{qwalk}
Let $f,u,P,\pi,\delta,\eps_f$, $\setup,\update,\checking$, $T_\Sc,T_\Uc,T_\Cc$, and $M_\Sc,M_\Uc,M_\Cc$,  be defined as above.
For any $\eps,\alpha>0$, there is a quantum distributed algorithm 
$\mathsf{WalkSearch}(P,\delta,\eps,\alpha)$ such that
\begin{enumerate}
    \item $\mathsf{WalkSearch}(P,\delta,\eps,\alpha)$ runs in 
    $O\left(\log(\frac{1}{\alpha}) \times \left(T_\Sc+\frac{1}{\sqrt{\eps}}\left( \frac{1}{\sqrt{\delta}}T_\Uc+T_\Cc\right)\right)\right)$
    rounds and
with message complexity
    $O\left(\log(\frac{1}{\alpha}) \times \left(M_\Sc+\frac{1}{\sqrt{\eps}}\left( \frac{1}{\sqrt{\delta}}M_\Uc+M_\Cc\right)\right)\right)$;
\item $\mathsf{WalkSearch}(P,\delta,\eps,\alpha)$ returns to
$u$ some $x$ in the support of $\pi$, which satisfies $f(x)=1$ with probability at least $1-\alpha$ when $\eps_f\geq \eps$.
\end{enumerate}
\end{theorem}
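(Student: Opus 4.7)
The plan is to adapt the MNRS quantum walk search framework of~\cite{mnrs} to the distributed setting, generalizing the centralization technique already used in the proof of \Cref{dgs}. The algorithm $\mathsf{WalkSearch}(P,\delta,\eps,\alpha)$ consists of $O(\log(1/\alpha))$ independent attempts. Each attempt (i) calls $\setup$ once to initialize the distributed ``database'' state, producing $\sum_{x\in X}\sqrt{\pi_x}\ket{x}_u\ket{\phi_x}_G$, and then (ii) performs $O(1/\sqrt{\eps})$ iterations of a Grover-style operator $G = R_M\, R_\pi$, where $R_M$ reflects about the marked subspace $\{x : f(x)=1\}$ and $R_\pi$ approximately reflects about the quantum stationary state $\ket{\pi}=\sum_x\sqrt{\pi_x}\ket{x}$. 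Finally, $u$ measures its own register; by the MNRS analysis, whenever $\eps_f\geq \eps$, the outcome $x$ satisfies $f(x)=1$ with constant probability, and the $\log(1/\alpha)$ attempts then boost the success probability to at least $1-\alpha$.

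The reflection $R_M$ is implemented exactly as in \Cref{dgs}: a phase flip on a bit holding $f(x)$, sandwiched by $\checking$ and its inverse, costing $O(T_\Cc)$ rounds and $O(M_\Cc)$ messages per application while leaving $\ket{\phi_x}_G$ intact. The reflection $R_\pi$ is realized via phase estimation of the bipartite walk operator $W(P)$ built from $\update$ in the standard MNRS fashion: $u$ holds two coupled registers corresponding to the edge $(x,y)$ currently being walked, each application of $W(P)$ uses $O(1)$ coherent calls to $\update$ and $\update^{-1}$ through the network, and phase estimation to precision $\Theta(\sqrt{\delta})$ requires $O(1/\sqrt{\delta})$ applications of $W(P)$. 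Summing the per-attempt costs yields the one-time $T_\Sc$ and $M_\Sc$ terms plus the $O(1/\sqrt{\eps})$ factor multiplying $(T_\Uc/\sqrt{\delta}+T_\Cc)$ and $(M_\Uc/\sqrt{\delta}+M_\Cc)$, which after the $\log(1/\alpha)$ amplification gives precisely the claimed bounds.

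For distributed orchestration we follow the lock-step synchronization argument of \Cref{dgs}: $u$ alone controls the loop counters and applies the purely local operations (preparing $\ket{\pi}_u$, diffusion on the coin register, the phase estimation QFT), while the network simply lock-steps through $\setup$, $\update$, $\checking$ and their inverses whenever $u$ requests, performing the identity otherwise. The global coherence required by amplitude amplification is preserved because every network-side procedure is realized as a unitary gated on $u$'s register, and because the check $\checking^{-1}\cdot\mathit{PF}\cdot\checking$ uncomputes its database-side ancillas after each use, restoring $\ket{\phi_x}_G$ up to the desired phase.

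The main technical obstacle is ensuring that $\update$, when invoked on a superposition $\sum_{x,y}\alpha_{x,y}\ket{x,y}_u\ket{\phi_x}_G$, coherently transitions each branch of the distributed database to $\ket{\phi_y}_G$ without spurious which-branch information leaking into network ancillas that would decohere the walk. This is precisely the contract we impose on $\setup$ and $\update$ in \Cref{sec:searchsetting}, so correctness ultimately reduces to the fact that the quantizations of these subroutines provided by the caller are genuinely unitary (cf.\ the quantization \Cref{sec:quantization}). Given that contract, our job in the theorem is only to verify that the MNRS black-box bound lifts additively through the distributed implementation, which is exactly what the lock-step argument above achieves—no extra overhead beyond a logarithmic factor already absorbed in the $\log(1/\alpha)$ amplification.
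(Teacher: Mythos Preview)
Your proposal is correct and follows essentially the same approach as the paper: both adapt the MNRS framework by using $\setup$ to prepare the joint starting state, $\checking$ (sandwiched with its inverse) to realize the marked-element reflection, and phase estimation of the walk operator $W(P)$ with $O(1/\sqrt{\delta})$ calls to $\update$ to realize the reflection about the stationary state, all repeated $O(\log(1/\alpha))$ times. The paper is slightly more explicit on one point you leave abstract, namely that within $W(P)=A\cdot\mathrm{SWAP}\cdot A$ the local reflections $A$ leave the first register (and hence $\ket{\phi_x}_G$) untouched, while the $\mathrm{SWAP}$ step is implemented by first calling $\update$ to move $\ket{\phi_x}_G\to\ket{\phi_y}_G$ and then swapping the two registers in $u$; this is exactly the ``contract'' you invoke in your last paragraph.
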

\begin{proof}
Let us start similarly to the proof of \Cref{dgs}, where now
\begin{itemize}
\item Starting state in $u$ while updating the network:  $\ket{s}=\sum_{x\in X} \sqrt{p_x}\ket{x,0}_u\ket{\phi_x}_G$;
\item Unitary map $D$: Reflection through the starting state $\ket{s}$;
\item Unitary map $S_f$: $\ket{x}_u\ket{\phi_x}_G\mapsto (-1)^{f(x)}\ket{x}_u\ket{\phi_x}_G$.
\end{itemize}
First it is clear that the starting state can be prepared using one use of $\setup$, and the unitary map 
$S_f$ performed using one use of $\checking$.
The rest of the proof is devoted to the realization of $D$. It relies on the use of a quantum walk to realize $D$ as in~\cite{mnrs}. The core of the proof is to realize the local reflection while updating the network state.

Fix for now $x\in X$, and let us focus on the state in $u$. 
Define $\ket{p_x}=\sum_{y\in X}\sqrt{p_{xy}}\ket{y}$. Let us call $D_x$ be the reflection through $\ket{p_x}$, and $A$ be that reflection controlled on $x$:
$$A=\sum_x \ketbra{x}\otimes D_x : \sum_{xy}\alpha_{x,y}\ket{x,y}\mapsto \sum_{xy}\alpha_{xy}\ket{x} \otimes (D_x\ket{y}).$$
Then the quantum walk operator is defined as
$$W(P)= A \times \mathrm{SWAP} \times A.$$
This is the core operator to implement $D$ using Phase Estimation with $\Theta(\frac{1}{\sqrt{\delta}})$ uses of $W(P)$. We are going to skip that part and refer to~\cite{mnrs}. 

The only thing that remains to prove is to maintain the network state compatible with the first register using $\update$. The application of $A$, does not affect the first register, so there is nothing to do.
We can now focus on the $\mathrm{SWAP}$ operation. Given the state $\ket{x,y}_u\ket{\phi_x}_G$ we would like to produce $\ket{y,x}_u\ket{\phi_y}_G$. This is exactly done by applying first $\update$ then $\mathrm{SWAP}$.

Putting all the pieces together gives the result.
\end{proof}

\section{Quantum Leader Election}
\label{sec:leader}

In this section, we give leader election algorithms, over several network configurations, and without prior shared randomness and entanglement. These leader election algorithms have (quantum) message complexities that are significantly better than the best known message complexities (on the respective network configurations). Moreover, for all but the leader election on graphs with a specified mixing time (cf. Section \ref{sec:LEonExpander}), the resulting quantum message complexity goes significantly below the corresponding classical message complexity lower bounds.

\subsection{Leader Election in Complete Networks}
\label{sec:implicitLE}\label{sec:leader-complete}

We describe a quantum protocol for implicit leader election (see \Cref{sec:pbs}). The main challenge is to beat, in the distributed setting, the birthday paradox on which the protocol of~\cite{KPPRT15} is based,
just as this was done by a sequential quantum algorithm for finding collisions in a random function~\cite{BHT98}.
In order to apply this technique, we need to break the symmetry of the original protocol.

\subsubsection{Algorithm Description.} The leader election protocol \quantumLE{} (\Cref{alg:leaderComplete}) is separated into two phases: a classical phase, followed by a quantum phase. It is also parametrized by some integer $k \in [1,n^{1/3}]$, which allows for a trade off between rounds and messages. Of particular interest is when $k = \Theta(n^{1/3})$, in which case the resulting protocol has optimal message complexity $\tilde{\Theta}(n^{1/3})$. 

In the classical phase, each node becomes a \emph{candidate} with probability $p = (\pCst \ln n)/ n$. Then, each candidate node first generates a (uniformly) random rank $r_v \in \{1,\ldots,n^4\}$ and sends that rank to $k$ neighbors chosen arbitrarily. When the classical phase terminates, we can define for each candidate node $v$ a (global) function $f_v : V \rightarrow \{0,1\}$ that assigns for any node $w \in V$ the value 1 if and only if $w$ received a rank strictly higher than that of $v$ (in the classical phase). 

In the quantum phase, candidate nodes determine whether they hold the highest rank or not (among candidate nodes), using $\tilde{O}(\sqrt{n/k})$ rounds and messages. To do so, each candidate node $v$ executes $\mathsf{GroverSearch}(\eps,\alpha)$ (\Cref{dgs}) with $\eps = k/n$, $\alpha = 1/n^2$, to search for nodes in $f_v^{-1}(1)$, or in other words, for any node that received a higher rank than that of $v$ in the classical phase. If, as a result, candidate node $v$ finds no node $w \in V$ such that $f_v(w)=1$, then $v$ becomes the leader. After these $\tilde{O}(\sqrt{n/k})$ rounds, all nodes terminate. 

When $v$ executes $\mathsf{GroverSearch}(\eps,\alpha)$, it uses a simple distributed algorithm $\checking_v$ to compute for any node $w \in V$ the value of $f_v(w)$ using two rounds and two messages; that algorithm consists of $v$ sending a message with its rank to $w$ in a first round, and receiving a reply with $f_v(w)$ in the second round. 

\begin{algorithm}[ht]\small
\begin{algorithmic}[1]
\Require A complete $n$-node anonymous network.
\Ensure Leader Election. \smallskip

\item[\textbf{Choosing Candidates (Classical):}]
  
  \State  Every node $v$ decides to become a candidate with probability $\frac{12\ln n}{n}$
  and generates a random rank $r_v$ from $\{1,\dots,n^4\}$.

   \State   If a node $v$ does not become a candidate, then it
  immediately enters the $\nonelected$ state;
  otherwise, it executes the next step.

 \item[\textbf{Choosing Referees (Classical):}]
  \State 
  Each candidate node $v$ contacts an arbitrary set of $k$  nodes  and sends its rank to all the $k$ nodes.\smallskip

  \item[\textbf{Distributed Grover Search (Quantum):}] 
  \State  Each candidate node $v$ runs $\mathsf{GroverSearch}(k/n,1/n^2)$ to search for some node $w \in V$ having received a rank strictly higher than that of $v$ in the classical phase (i.e., with $f_v(w)=1$).
\smallskip

 \item[\textbf{Decision (Classical):}] 
    \State 
If a candidate node $v$ finds no node $w \in V$ such that $f_v(w)=1$, then $v$ enters the $\elected$ state (becomes the leader). Otherwise, it enters the $\nonelected$ state.
\end{algorithmic}
\caption{(\quantumLE{}) {Quantum leader election protocol for complete networks}}
\label{alg:leaderComplete}
\end{algorithm}

\subsubsection{Analysis.} First, we recall that \Cref{obs:sampleAndUniqueRanks} in Subsection \ref{subsec:randomizedPrelims} captures well-known statements regarding sampling and choosing unique ranks, and we will use these in the following analysis. More concretely, with probability at least $1-1/n^2$, it holds both that the number of candidate nodes is non-zero but at most $\pMax \ln n$, and that these candidates choose unique random ranks. 

Next, we give a key observation on the (global) functions $f_v$, defined for any candidate node $v$, which implies that our use of (distributed) Grover Search is correct. Recall that these are defined as $f_v : V \rightarrow \{0,1\}$ such that for any node $w \in V$,  $f_v(w) = 1$ if and only if $w$ received a rank strictly higher than that of $v$ in the classical phase.

\begin{fact}
\label{obs:electionFunctions}
    With probability at least $1-1/n^2$, it holds both that:
    \begin{itemize}
        \item There exists a single candidate node $l$ such that $|f_l^{-1}(1)| = 0$, and it is the candidate node with the highest rank. 
        \item For any other candidate node $u \neq l$, $|f_u^{-1}(1)| \geq k$. 
    \end{itemize}
\end{fact}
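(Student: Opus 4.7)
The plan is to condition on the high-probability event from \Cref{obs:sampleAndUniqueRanks}, under which both parts become deterministic consequences of the ranking structure. Specifically, let $\mathcal{E}$ denote the event that the number of candidates lies in $(0,\pMax\ln n]$ and that all candidate ranks are distinct; by the cited fact, $\Pr[\mathcal{E}] \geq 1 - 1/n^2$. The whole fact will follow by showing that $\mathcal{E}$ implies both bullet points deterministically.

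Assuming $\mathcal{E}$, there is a well-defined candidate $l$ of strictly maximum rank (uniqueness from distinct ranks, existence from non-emptiness). For the first bullet, I would argue that $l$ itself vacuously satisfies $|f_l^{-1}(1)| = 0$: every rank transmitted during the classical phase is sent by some candidate $v$ and equals $r_v < r_l$, so no node $w \in V$ could ever have received a rank strictly greater than $r_l$, giving $f_l(w) = 0$ for every $w$. For the second bullet, I would fix any other candidate $u \neq l$ and use that $l$ contacted $k$ arbitrarily chosen neighbors with its rank $r_l$; since $r_l > r_u$, each of these $k$ recipients $w$ satisfies $f_u(w) = 1$. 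The complete graph ensures $l$ actually has $k$ neighbors to contact, and these $k$ recipients are distinct by construction (they form a single chosen set of neighbors), so $|f_u^{-1}(1)| \geq k$. This also gives uniqueness of $l$ as the only candidate with $|f_l^{-1}(1)| = 0$: any other candidate $u$ has $|f_u^{-1}(1)| \geq k \geq 1 > 0$.

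Combining, under $\mathcal{E}$ both bullets hold deterministically, and since $\Pr[\mathcal{E}] \geq 1 - 1/n^2$, the fact follows. There is no real obstacle here beyond bookkeeping; the only mild subtlety is to ensure that the strict inequalities ($r_l > r_v$ for $v \neq l$, and the strict comparison in the definition of $f_v$) line up correctly, which is exactly what distinctness of ranks guarantees and why the uniqueness clause of \Cref{obs:sampleAndUniqueRanks} is invoked.
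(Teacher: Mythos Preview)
Your proposal is correct and follows essentially the same approach as the paper's own proof: condition on the event of \Cref{obs:sampleAndUniqueRanks}, take $l$ to be the unique candidate of maximum rank, argue that no node can have received a rank exceeding $r_l$, and use the $k$ recipients of $l$'s rank to witness $|f_u^{-1}(1)|\geq k$ for every other candidate $u$. Your write-up is slightly more explicit about the role of distinctness and about why $l$ is the \emph{only} candidate with empty preimage, but the argument is the same.
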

\begin{proof}
    By \Cref{obs:sampleAndUniqueRanks}, with probability at least $1-1/n^2$, at least one, and at most $\pMax \ln n$, nodes become candidates, and they all choose unique ranks. Let $l$ be the candidate node with the highest rank. Then, no node in $V$ receives any rank strictly higher than that of $l$, and thus $|f_l^{-1}(1)| = 0$. Moreover, in the classical part, node $l$ sends its rank to at least $k$ other nodes. Hence, for all other candidate nodes $u \neq l$, $|f_u^{-1}(1)| \geq k$. 
\end{proof}

Now, we can prove that the above algorithm solves implicit leader election, and in fact does so significantly more message-efficiently than any classical algorithm can. First, we give our round and message complexity upper bounds parameterized by some integer $k \geq 1$. 
\begin{theorem}\label{thm:quantumle}
    \quantumLE{} solves implicit leader election with probability at least $1-1/n$.  Moreover, it takes $\tilde{O}(\sqrt{n/k})$ rounds and with probability at least $1-1/n$, it sends $\tilde{O}(k+ \sqrt{n/k})$ messages.
\end{theorem}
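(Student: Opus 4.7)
The plan is to condition on two high-probability events and then handle correctness, round complexity, and message complexity separately. Let $\mathcal{E}_1$ be the event (from \Cref{obs:sampleAndUniqueRanks}) that the number of candidates lies in $[1, \pMax \ln n]$ and that all candidate ranks are distinct, and let $\mathcal{E}_2$ be the event (from \Cref{obs:electionFunctions}) that the unique highest-ranked candidate $l$ satisfies $|f_l^{-1}(1)|=0$ while every other candidate $u$ satisfies $|f_u^{-1}(1)| \geq k$. Each event holds with probability at least $1-1/n^2$, so their intersection fails with probability at most $2/n^2$.

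\emph{Correctness.} Under $\mathcal{E}_1 \cap \mathcal{E}_2$, the subroutine $\checking_v$ implements $f_v$ in $T_\Cc = O(1)$ rounds with $M_\Cc = O(1)$ messages. For the unique highest-ranked candidate $l$, one has $f_l \equiv 0$, so the post-Grover classical verification returns $0$ and $l$ enters \elected{} with certainty. For every other candidate $u$, one has $\eps_{f_u} \geq k/n = \eps$, so applying \Cref{dgs} with $\alpha=1/n^2$ yields, with probability at least $1-1/n^2$, a witness $w$ such that $f_u(w)=1$, after which $u$ enters \nonelected{}. A union bound over the at most $\pMax \ln n$ candidates and the failure probabilities of $\mathcal{E}_1$ and $\mathcal{E}_2$ gives the required overall success probability $1 - 1/n$.

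\emph{Round complexity.} The classical phase completes in $O(1)$ rounds because $G$ is complete and CONGEST allows one message per edge per round, so each candidate can dispatch its $k$ rank messages along $k$ distinct edges in parallel. \Cref{dgs} then bounds the quantum phase by $O(\log n \cdot T_\Cc / \sqrt{k/n}) = \tilde{O}(\sqrt{n/k})$ rounds. Since \Cref{dgs} imposes a fixed cap on its round count, this bound is deterministic, yielding the claimed overall round complexity.

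\emph{Message complexity.} Conditioned on $\mathcal{E}_1$, the at most $\pMax \ln n$ candidates jointly send $O(k \log n)$ messages in the classical phase. \Cref{dgs} bounds each candidate's Grover invocation by $O(\log n \cdot M_\Cc / \sqrt{k/n}) = O(\log n \sqrt{n/k})$ messages; summing over the $O(\log n)$ parallel executions yields $O(\log^2 n \sqrt{n/k})$ messages for the quantum phase. Combining both phases gives the claimed $\tilde{O}(k + \sqrt{n/k})$ bound, which holds with probability at least $1-1/n$ via $\mathcal{E}_1$. The main subtlety I would handle carefully is reconciling the parallel quantum Grover executions with the superposition-based message-complexity definition of \Cref{sec:qdc}: one must verify that the per-candidate bounds of \Cref{dgs} aggregate additively across candidates despite each candidate's target being in a quantum superposition. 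This follows from the definition that per-round message complexity is the worst case over the superposition of classical configurations, so in each branch the $O(\log n)$ candidates together contribute at most $O(\log n)$ messages per round, matching the claimed bound.
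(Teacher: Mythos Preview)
Your proposal is correct and follows essentially the same approach as the paper: condition on the high-probability events of \Cref{obs:sampleAndUniqueRanks} and \Cref{obs:electionFunctions}, argue that the top-ranked candidate necessarily finds no witness while every other candidate finds one via \Cref{dgs}, and then bound rounds and messages by summing the $O(\log n)$ per-candidate Grover costs. The only cosmetic difference is that the paper handles the ``no congestion among parallel Grover instances'' point by noting that distinct candidates use distinct edges, whereas you appeal directly to the superposition-based message-complexity definition; both arguments reach the same conclusion.
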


\begin{proof}
    We first show correctness. By \Cref{obs:electionFunctions}, with probability at least $1-1/n^2$, there exists a single candidate node $l$ such that $|f_l^{-1}(1)| = 0$, which is the candidate node with the highest rank, whereas for any other candidate node $u \neq l$, $|f_u^{-1}(1)| \geq k$. First, since $|f_l^{-1}(1)| = 0$, then by Theorem \ref{dgs}, $l$ declares with probability 1 that no element $w \in V$ with $f_l(w) = 1$ was found, and becomes leader. On the other hand, for any other candidate node $u \neq l$, we have $\eps_f = |f_u^{-1}(1)|/n \geq k/n$. As $\eps_f \geq \eps$, by Theorem \ref{dgs}, candidate $u$ outputs some element $w \in V$ with $f_u(w) = 1$ with probability $1-\alpha=1-1/n^2$, and in which case $u$ does not become leader. Thus, by a union bound over all (the at most $n$) candidate nodes $u \neq l$, the following statement holds with probability at least $1- 1/n$: the candidate node with highest rank becomes a leader, and it is the only node to do so.

    Next, we prove the round and message complexities. First, by \Cref{obs:sampleAndUniqueRanks}, we upper bound the number of candidates by $O(\log n)$, with probability at least $1-1/n^2$. We condition the remainder of the proof of the event that there are at most $O(\log n)$ candidates. Since each candidate sends $k$ messages in a single round, the classical phase takes $O(1)$ rounds and uses $O(k \log n)$ messages. As for the quantum phase, by Theorem \ref{dgs} and the fact that for any candidate node $v$, $\checking_v$ takes 2 rounds and messages, we get that each $\mathsf{GroverSearch}(k/n,1/n^2)$ 
    sends $\tilde{O}(\sqrt{n/k})$ messages over $\tilde{O}(\sqrt{n/k})$ rounds. Note that the calls of some candidate node $v$ during both the classical and quantum phases use edges different from those used by calls from any other candidate node $u$. Hence, the number of candidate nodes induces no (edge) congestion and thus runtime overhead, but only a $O(\log n)$ message overhead (conditioned on having $O(\log n)$ candidates). It follows that the overall round complexity is $\tilde{O}(\sqrt{n/k})$ rounds deterministically, and the overall message complexity is $\tilde{O}(k + \sqrt{n/k})$ messages with probability at least $1-1/n^2$.
\end{proof}

Setting $k = \Theta(n^{1/3})$ optimizes the message complexity of the above leader election protocol. The resulting message complexity is significantly better than the \emph{tight} $\tilde{\Theta}(\sqrt{n})$ message complexity of~\cite{KPPRT15}.

\begin{corollary}
    \quantumLE{} solves implicit leader election with probability at least $1-1/n$.  Moreover, it takes $\tilde{O}(n^{1/3})$ rounds and with probability at least $1-1/n$, it sends $\tilde{O}(n^{1/3})$ messages.
\end{corollary}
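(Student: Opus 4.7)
The plan is to obtain this corollary as a direct specialization of \Cref{thm:quantumle} by choosing the parameter $k$ to balance the two terms in the message complexity expression. \Cref{thm:quantumle} gives, for any integer $k \in [1, n^{1/3}]$, a round complexity of $\tilde{O}(\sqrt{n/k})$ and a message complexity of $\tilde{O}(k + \sqrt{n/k})$ (with high probability). The message expression $k + \sqrt{n/k}$ is minimized, up to constants, when its two summands are equal, i.e., when $k = \sqrt{n/k}$, or equivalently $k^3 = n$. So I would set $k = \lceil n^{1/3} \rceil$, which lies in the allowed range $[1, n^{1/3}]$.

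Substituting $k = \Theta(n^{1/3})$ into the two complexity expressions is the whole calculation. For rounds I get $\tilde{O}(\sqrt{n/n^{1/3}}) = \tilde{O}(n^{1/3})$, and for messages I get $\tilde{O}(n^{1/3} + \sqrt{n/n^{1/3}}) = \tilde{O}(n^{1/3} + n^{1/3}) = \tilde{O}(n^{1/3})$. The correctness claim (success probability at least $1-1/n$) and the high-probability guarantee on the message count both transfer verbatim from \Cref{thm:quantumle}, since the choice of $k$ plays no role in those probability bounds. There is no genuine obstacle here: the only thing to verify is that the balancing value $k = n^{1/3}$ indeed falls within the permitted parameter range, which it does at the upper endpoint, so the hypotheses of \Cref{thm:quantumle} are satisfied and the corollary follows.
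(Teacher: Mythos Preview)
Your proposal is correct and matches the paper's approach exactly: the paper simply states that setting $k = \Theta(n^{1/3})$ optimizes the message complexity and records the resulting bounds, with no further argument. One cosmetic slip: $\lceil n^{1/3}\rceil$ can exceed $n^{1/3}$, so to stay literally inside the stated parameter range $[1,n^{1/3}]$ you should take $k=\lfloor n^{1/3}\rfloor$ instead, which of course changes nothing asymptotically.
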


However, the above improvement to message complexity for leader election in complete networks comes at the cost of a significantly increased runtime. Still, we point out that even if the aim is $o(n^{1/3})$ runtime, \quantumLE{} can obtain a message complexity that goes below the classical setting's best achievable message complexity of $\tilde{\Omega}(\sqrt{n})$. Indeed, when $k = n^{5/12}$, \quantumLE{} takes $\tilde{O}(n^{7/24}) = o(n^{1/3})$ rounds, and uses only $\tilde{O}(n^{5/12}) = o(\sqrt{n})$ messages (with high probability).

\subsection{Leader Election in Graphs with Mixing Time \texorpdfstring{$\tau$}{tau}}
\label{sec:LEonExpander}

Now, we show how to extend the previous algorithm \quantumLE{} to any (communication) network $G$ by using random walks on $G$.
This will result in an efficient algorithm when $G$ is an expander, or when its mixing time $\tau$ is small enough. 
The main idea is to replace an exploration on the neighborhood of a vertex, by a random walk from this vertex.
Nonetheless, there is a technical subtlety: due to the centralization of one part of Grover search, we cannot just walk on the network.

\subsubsection{Algorithm Description.} The leader election protocol \quantumRWLE{} (\Cref{alg:leaderexpander}) for general graphs
with given mixing time $\tau$ generalizes the one for complete graphs (\Cref{alg:leaderComplete}).
The exploration of the neighborhood of a candidate vertex $v$ is replaced by a \emph{random walk on the network}.
The idea of using a random walk is taken from the classical protocol in~\cite{KPPRT15},
with which they achieve a message complexity of $\tilde{O}(\tau \sqrt{n})$.
Due to the centralization of one part of $\mathsf{GroverSearch}$, we will need $v$ to control the random choices of the walk, leading to a blow up of $\tau$ in the message complexity.

\emph{This random walk is not used for searching via quantum walk.} This is just an inner procedure for the initialization and also for implemented $\checking$.

The protocol is again separated into two similar phases. In the classical phase, each candidate node generates a $k$ random walk tokens containing its rank $r_v$. Finally, all of the candidates' random walk tokens take $O(\tau)$ steps (over the communication network $G$). In this step, the random choices are delegated to the corresponding nodes of the walk. So the message and round complexity are simply $\tilde{O}(\tau k)$.

In the quantum phase, each candidate node $v$ queries an $\Theta(\tau)$-length random walk starting at $v$ (over the communication network $G$), and the random walk is said to be good if it ends at a node having received a higher rank (than $v$'s rank). 
Due to the centralization of one part of $\mathsf{GroverSearch}$, we ask $v$ to \emph{make} the random choices of the walk and to \emph{propagate} them through the walk itself, leading to a blow up of the message (and round) complexity from $\tilde{O}(\tau)$ to $\tilde{O}(\tau^2)$ for each walk.
In other words, node $v$ runs $\mathsf{GroverSearch}$ with $\eps=k/n$, $\alpha = 1/n^2$, and with the function $f_v: X \rightarrow \{0,1\}$ that assigns to any $\Theta(\tau)$-length random walk (made of $O(\tau\log n)$ random bits) in $X$ the value 1 if the walk ends at a node having received some rank $r_w > r_v$ in the classical phase, and 0 otherwise.
If as a result, the candidate node $v$ does not find any good random walks, then $v$ becomes the leader.

\begin{algorithm}[ht]\small
\begin{algorithmic}[1]
\Require A $n$-node anonymous network with mixing time $\tau$
\Ensure Leader Election. 
\smallskip

\item[\textbf{Choosing Candidates (Classical):}] See \Cref{alg:leaderComplete}
 
 \item[\textbf{Choosing Referees (Classical):}]
  \State 
  Each candidate node $v$ contacts and sends its rank to an arbitrary set of $k$ nodes  through $k$ random walks of length $\Theta(\tau)$.
\smallskip

  \item[\textbf{Distributed Grover Search (Quantum):}] 
  \State  Each candidate node $v$ runs $\mathsf{GroverSearch}(k/n,1/n^2)$ to search for some $\Theta(\tau)$-length random walk $r$ reaching a node $w \in V$ having received a rank strictly higher than that of $v$ in the classical phase (i.e., with $f_v(r)=1$).
 \smallskip
 
 \item[\textbf{Decision (Classical):}] 
    \State 
If a candidate node $v$ finds no random walks $r$ such that $f_v(r)=1$, then $v$ enters the $\elected$ state (becomes the leader). Otherwise, it enters the $\nonelected$ state.
\end{algorithmic}
\caption{(\quantumRWLE{}) {Quantum leader election protocol for graphs with mixing time $\tau$}}
\label{alg:leaderexpander}
\end{algorithm}

\subsubsection{Analysis.}

\begin{theorem}
    \quantumRWLE{} solves implicit leader election with probability at least $1-1/n$.  Moreover, it takes $\tilde{O}(\tau k+  \tau^2\sqrt{n/k})$ rounds and with probability at least $1-1/n$, it sends $\tilde{O}(\tau k+ \tau^2\sqrt{n/k}))$ messages.
\end{theorem}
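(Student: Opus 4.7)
The plan is to mirror the analysis of \quantumLE{} (\Cref{thm:quantumle}), adapting each step to the fact that length-$\Theta(\tau)$ random walks now replace the single-hop contacts available in a complete network. I would begin by invoking the same sampling facts used in the proof of \Cref{thm:quantumle} so that, with probability at least $1-1/n^2$, the number of candidates is between $1$ and $O(\log n)$ and their ranks are distinct; let $l$ be the unique highest-ranked candidate. In the classical \textbf{Choosing Referees} phase, each candidate dispatches $k$ tokens along walks of length $\Theta(\tau)$, and since the random step at each hop is chosen locally by the current holder, each walk uses $\Theta(\tau)$ messages and $\Theta(\tau)$ rounds. Conditioning on $O(\log n)$ candidates, this phase thus costs $\tilde{O}(\tau k)$ messages and $\tilde{O}(\tau)$ rounds.

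Next I would prove the natural analogue of \Cref{obs:electionFunctions}. For each candidate $v$, define $f_v\colon X\to\{0,1\}$ on the space $X$ of length-$\Theta(\tau)$ walk transcripts starting at $v$ by $f_v(r)=1$ iff the endpoint of $r$ received some rank strictly greater than $r_v$ in the classical phase. For $l$ no node received a higher rank, so $f_l^{-1}(1)=\emptyset$ and, by \Cref{dgs}, $\mathsf{GroverSearch}$ deterministically returns no witness, so $l$ enters \elected. For any other candidate $u$, the $k$ referees of $l$ are approximately $\pi$-distributed by the mixing hypothesis, and a uniformly random walk transcript from $u$ also ends at an approximately $\pi$-distributed vertex; combining both mixing statements gives $\Pr_{r\in X}[f_u(r)=1]\ge \Omega(k/n)$, so $\eps_{f_u}\ge k/n$ up to constants absorbed into the algorithm's $\eps$ parameter.

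To implement $\checking_v$ on a walk transcript $r\in X$, node $v$ generates locally the $\Theta(\tau\log n)$ random bits of $r$ (so the walk's random choices are under $v$'s control, as required for the centralized part of \Cref{dgs}), and then forwards them hop by hop: at step $i$, the current holder of the token reads the next $O(\log n)$ bits to decide the next neighbor and forwards the remaining transcript together with $r_v$. Because each CONGEST link transports only $O(\log n)$ bits per round while a residual transcript can contain $\Theta(\tau\log n)$ bits, each hop takes $\Theta(\tau)$ messages and rounds, so $T_\Cc=M_\Cc=\tilde{O}(\tau^2)$ per query (plus a constant overhead for the endpoint to return $f_v(r)$ to $v$). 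Quantizing $\checking_v$ is routine by the informal lemma in \Cref{sec:quantization}, and plugging this into \Cref{dgs} with $\eps=k/n$ and $\alpha=1/n^2$ gives $\tilde{O}(\tau^2\sqrt{n/k})$ rounds and messages per candidate for the quantum phase.

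Finally I would assemble the bounds: a union bound over the $O(\log n)$ candidates (each Grover call failing with probability at most $1/n^2$) yields correctness with probability at least $1-1/n$, and since the $O(\log n)$ candidates route their walks and queries along essentially disjoint edges, edge congestion adds only logarithmic factors already absorbed by $\tilde{O}$. Summing classical and quantum costs gives $\tilde{O}(\tau k+\tau^2\sqrt{n/k})$ for both rounds and messages. The main obstacle I anticipate is the $\checking_v$ analysis: one must justify both (i) that a $\Theta(\tau)$-length walk is close enough to stationary that landing on any specific $k$-subset has probability $\Omega(k/n)$, possibly needing a lazy or Metropolis-style variant to guarantee a near-uniform $\pi$ on non-regular graphs, and (ii) that the hop-by-hop transcript forwarding is a genuine unitary that returns the network to its prior state after each query, so that it can be used coherently inside Grover's iterate without introducing spurious intermediate measurements.
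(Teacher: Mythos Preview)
Your proposal is correct and matches the paper's approach exactly: the paper's proof is a single sentence stating that the argument is the same as for \Cref{thm:quantumle} with single-hop neighbor queries replaced by $\Theta(\tau)$-length random walks, and you have faithfully expanded precisely that sketch.

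Two minor imprecisions worth tightening. First, your claim that the classical Choosing Referees phase takes $\tilde{O}(\tau)$ rounds is optimistic: in an arbitrary graph the $k$ walk tokens from a single candidate may collide on low-degree edges (indeed the candidate itself may have degree $o(k)$), so the paper simply charges $\tilde{O}(\tau k)$ rounds here as well, which is already absorbed in the theorem's bound. Second, the assertion that the $O(\log n)$ candidates ``route their walks and queries along essentially disjoint edges'' is false in general graphs; the correct justification is simply that there are only $O(\log n)$ candidates, so any edge congestion they induce is a $\mathrm{polylog}(n)$ factor hidden by $\tilde{O}$. Neither point affects the final statement.
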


\begin{proof}
The proof is similar to \Cref{thm:quantumle}, where explorations are replaced by random walks, and the function evaluations $f_u(w)$ for node $w$ adjacent to $u$ is replaced by the evaluation of $f_u(r)$ for any $\Theta(\tau)$-length random walk $r$ starting at $u$.
\end{proof}

Setting $k = \Theta(\tau^{2/3}n^{1/3})$ optimizes the message complexity of the above leader election protocol. 

\begin{corollary}
    \quantumRWLE{} solves implicit leader election with probability at least $1-1/n$.  Moreover, it takes $\tilde{O}(\tau^{5/3}n^{1/3})$ rounds and with probability at least $1-1/n$, it sends $\tilde{O}(\tau^{5/3}n^{1/3})$ messages.
\end{corollary}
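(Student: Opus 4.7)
The plan is to directly invoke the round and message complexity bounds from the preceding theorem, which gives $\tilde{O}(\tau k + \tau^2 \sqrt{n/k})$ for both complexities when \quantumRWLE{} is parameterized by any valid integer $k$. The correctness statement and the $1-1/n$ success probability bound are independent of the choice of $k$, so they carry over verbatim. The only remaining step is to choose the value of $k$ that minimizes the complexity expression, i.e., to balance the two terms $\tau k$ and $\tau^2 \sqrt{n/k}$.

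To perform the optimization, I would set $\tau k = \tau^2 \sqrt{n/k}$. Dividing through by $\tau$ yields $k = \tau \sqrt{n/k}$; squaring gives $k^2 = \tau^2 n/k$, hence $k^3 = \tau^2 n$, so that $k = \Theta(\tau^{2/3} n^{1/3})$. Substituting back, the first term becomes $\tau k = \tau \cdot \tau^{2/3} n^{1/3} = \tau^{5/3} n^{1/3}$. For the second term, observe that $n/k = n^{2/3}/\tau^{2/3}$, so $\sqrt{n/k} = n^{1/3}/\tau^{1/3}$, and therefore $\tau^2 \sqrt{n/k} = \tau^2 \cdot n^{1/3}/\tau^{1/3} = \tau^{5/3} n^{1/3}$. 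Both terms collapse to the claimed $\tilde{O}(\tau^{5/3} n^{1/3})$ bound, establishing the corollary.

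There is no substantive obstacle here, as the argument is purely the optimization of the two-term expression inherited from the previous theorem. The only minor check is that the chosen $k = \Theta(\tau^{2/3} n^{1/3})$ lies in a valid range for the algorithm; since $\tau \leq n$ always holds in any connected network, we have $k \leq n$, and the theorem's guarantees apply directly. The conceptual content of this corollary is therefore entirely contained in the preceding theorem, with this statement serving to identify the best achievable trade-off point.
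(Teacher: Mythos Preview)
Your proposal is correct and matches the paper's approach exactly: the paper simply states that setting $k = \Theta(\tau^{2/3}n^{1/3})$ optimizes the message complexity, and the corollary follows immediately from the preceding theorem. Your derivation of this optimal $k$ by balancing the two terms is more detailed than what the paper provides, but the substance is identical.
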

 
\subsection{Leader Election in Graphs with Diameter 2}
\label{sec:implicitLED}

We now turn to graphs with diameter $2$, probably our most challenging algorithm.
The algorithm will now use a quantum walk. This time the walk is not on the (communication) network as in \quantumRWLE{}.
Instead, the walk is on a data structure maintained locally by some nodes in order to speedup their Grover Search using \textsc{WalkSearch}. A similar quantum walk was used to solve element distinctness~\cite{Amb07}, which is basically the worst case of collision finding. Nonetheless, the situation here is much more complicated,
and in addition, one of our instances of Grover search is decentralized.

\subsubsection{Algorithm Description.}
The leader election protocol \quantumQWLE{} (\Cref{alg:leaderdiam2}) for graphs with diameter $2$ has similarities with protocol \quantumLE{} in \Cref{sec:implicitLE}. The main difference is that now the selection of the referees is done in superposition. Then a quantum walk is used in order to search for a subset of referees contradicting the leadership of a candidate.
For technical reasons, we will be able to test candidates only one by one. To break the symmetry, a candidate decides randomly to be either active or passive. Most of the time, no candidate node is active, so none of them is disregarded. With some probability of magnitude $\Theta(1/\log^2 n)$, a single node is active. Then, it is really tested. Otherwise, when more candidates are active, no guarantee can be given except that the candidate with the higher rank will never enter a $\nonelected$ state.

More precisely, active nodes challenge themselves against passive ones.
To do so, active nodes $v$ perform a distributed quantum search via quantum walk, that is the quantum routine
$\mathsf{WalkSearch}(P_v,\delta,\eps,\alpha)$ (\Cref{thm:quantumle}), where $\alpha=1/n^2$, and the  other parameters are described below.

For an active candidate node $v$, the quantum walk is based on a uniform random walk $P_v$ on the Johnson graph $J(\deg(v),k)$, whose spectral gap $\delta\approx 1/k$, when $k=o(n)$. 
We remind that the Johnson Graph $J(\deg(v),k)$ is a graph whose vertices are all subsets $W$ of size $k$ from a given universe of size $\deg(v)$. In our case, this universe is the set of nodes connected to $v$. Then two subsets are adjacent in $J(\deg(v),k)$ when they differ by exactly $1$ element. Since $P$ is uniform, the transition probabilities are uniform and the stationary distribution $\pi$ is also uniform over all possible subsets of size $k$.
Therefore setting $\eps=k/\deg(v)$ will suit the primitive's requirement, since this lower bounds the probability that a given referee belongs to a random subset $W$.

Notice that the walk $P_v$ is performed locally in $v$ on a \emph{virtual} graph different from the network graph. Nonetheless, $P_v$ guides the distributed quantum search over the network. Indeed, procedure $\setup$ propagates the rank $r_v$ of $v$ to each nodes of $W$. Procedure $\update$ adjusts this information to correspond to a new subset $W'$ obtained from $W$ by removing one element and adding a new one.
Finally, the purpose of $\checking$ is to implement the function $f$ such that $f(W)=1$ when there is a referee $w\in W$ connected to a passive candidate with higher rank.

Whereas $\setup$ and $\update$ are the direct quantization of their deterministic description,
$\checking$ will be implemented in $2$ phases. In the first one, a distributed Grover Search is done by each passive candidate node $v'$, in order identify any potential referee $w$ holding a smaller rank than $r_{v'}$. This search is decentralized since candidate nodes perform it on a given round even without being notified directly by an active candidate node. 
Then, another Grover Search is done by each candidate node $v$ in order to detect if such a referee has been informed of any passive active node with higher rank.

\begin{algorithm}[ht]\small
\caption{(\quantumQWLE) Quantum leader election protocol for graphs with diameter $2$}
\label{alg:leaderdiam2}
\begin{algorithmic}[1]
\Require A diameter $2$ $n$-node anonymous network.
\Ensure Leader Election.
\smallskip

\item[\textbf{Choosing Candidates (Classical)}:] See \Cref{alg:leaderComplete}
\smallskip

\item[\textbf{Iteratively Choosing and Testing Active Candidates (Classical \& Quantum)}:]
\For{$\Theta(\log^3 n)$ iterations}
\State Each candidate becomes active with probability $\Theta(1/\log^2 n)$, and passive otherwise \smallskip
\ForAll{active candidates $v$} \textbf{Distributed Search via Quantum Walk:}
\State Run \textsc{WalkSearch}$(P,k/\deg(v),1/k,1/n^2)$ with
\item[\hspace{2.5em}]  $P$: random walk on Johnson Graph $J(\deg(v),k)$ s.t. vertices are subsets $W$  made of $k$ neighbors of $v$
\item[\hspace{2.5em}]  $f$: function s.t. $f(W)=1$ iff there is $w\in W$ connected to a passive candidate $v'$ with higher rank
\State $\setup(W)$: For a subset $W$ of $k$ neighbors, send rank $r_v$ to all $w\in W$  \Comment{\emph{Contact Referees}}
\State $\update(W,W')$: For $W'=(W\setminus{\set{w})\cup\set{w'}}$, ask $w$ to send back $r_v$, and send $r_v$ to $w'$ \Comment{\emph{Replace $w$}}
\State $\checking(W)$: In two steps \Comment{\emph{Compute $f_v(W)$}}
\State {Decentralized step:} Each passive candidate $v'$ runs \textsc{GroverSearch}$(1/\deg(v'),1/n^3)$ to find any node $w$ in their own neighborhood, with a smaller rank (than that of $v'$) and sends the rank to $w$
\State {Centralized step:} Each active candidate $v$ runs \textsc{GroverSearch}$(1/k,1/n^3)$ to identify a node $w \in W$ which received a higher rank (than that of $v$) 
\EndFor
\smallskip

\item[\qquad \textbf{Decision}:]
\State Each active candidate node $v$ performs a last call to the above $\checking$ on their current set $W$
\If{a referee $w\in W$ is found which received a higher rank (than that of $v$)}
    \State Node $v$ decides to enter the $\nonelected$ state (and not participate to the next loop-iteration)
\Else
\State Node $v$ decides to remain a candidate 
\EndIf
\EndFor
\smallskip

\item[\textbf{Ending (Classical)}:]
\State Each remaining active candidate enters $\elected$ state
\end{algorithmic}
\end{algorithm}

\subsubsection{Analysis.}
\begin{theorem}
    \quantumQWLE{} solves implicit leader election with probability at least $1-1/n$.  Moreover, it takes $\tilde{O}(n/\sqrt{k})$ rounds and with probability at least $1-1/n$, it sends $\tilde{O}(k+ n/\sqrt{k})$ messages.
\end{theorem}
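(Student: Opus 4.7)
The plan is to reduce correctness to two claims — (i) \emph{safety}: the top-ranked candidate $v_1$ is never moved to \nonelected, and (ii) \emph{liveness}: every other candidate $v_j$ ($j \geq 2$) is moved to \nonelected{} within the $\Theta(\log^3 n)$ outer iterations — and then to read off the round and message complexity from the cost formula of \Cref{qwalk}. As a preliminary step I apply \Cref{obs:sampleAndUniqueRanks} to conclude that, with probability $\geq 1 - 1/n^2$, the candidate set $C = \{v_1,\ldots,v_s\}$ has $1 \leq s = O(\log n)$ and pairwise distinct ranks $r_{v_1} > \cdots > r_{v_s}$, and condition on this event.

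Safety is the easier direction: in iterations where $v_1$ is passive it trivially survives, and when $v_1$ is active no passive candidate has rank above $r_{v_1}$, so the intended function $f_{v_1}$ is identically zero on $k$-subsets of $N(v_1)$; by the correctness guarantees of the two nested \textsc{GroverSearch} calls inside $\checking$ (each with failure probability $1/n^3$), the final verification $\checking(W)$ correctly returns $0$ per iteration with probability $\geq 1 - O(1/n^3)$, and a union bound over the $\Theta(\log^3 n)$ iterations completes safety. For liveness, fix $u = v_j$ with $j \geq 2$ and let $E_u$ be the event that $u$ is the unique active candidate in a given outer iteration; since $s = O(\log n)$ and the activation probability is $\Theta(1/\log^2 n)$, we have $\Pr[E_u] = \Theta(1/\log^2 n)$. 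Under $E_u$ all higher-ranked candidates are passive, and the combined decentralized + centralized $\checking$ effectively evaluates $f_u(W) = \mathbf{1}[W \cap R \neq \emptyset]$, where $R \subseteq N(u)$ is the set of neighbors of $u$ that are also adjacent to some higher-ranked passive candidate. The crucial technical step is the density bound $\eps_{f_u} \geq 1/k$ under the uniform distribution over $k$-subsets of $N(u)$, which I would obtain using the diameter-$2$ assumption ($v_1$ and $u$ are either adjacent or share a common neighbor), combined with the fact that almost every neighbor of a higher-ranked passive qualifies as a smaller-ranked target of its decentralized Grover, so that $|R|$ occupies an $\Omega(1/k)$-fraction of $N(u)$. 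Given this density, \Cref{qwalk} returns a witnessing $W$ with probability $\geq 1 - 1/n^2$, and the final $\checking(W)$ then eliminates $u$; hence each iteration eliminates $u$ with probability $\Omega(1/\log^2 n)$, and after $\Theta(\log^3 n)$ iterations the probability that $u$ survives drops below $1/n^{\Omega(1)}$. A union bound over the $O(\log n)$ non-top candidates and the safety bound yields overall correctness with probability $\geq 1 - 1/n$.

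For complexity, I substitute into \Cref{qwalk} with $\delta = k/\deg(v)$, $\eps = 1/k$, $T_\Sc = M_\Sc = O(k)$, $T_\Uc = M_\Uc = O(1)$, and $T_\Cc = M_\Cc$ driven by the nested Grover calls inside $\checking$; after amortising the decentralized step (which does not depend on $W$ and can be prepared once per outer iteration), each outer iteration costs $\tilde{O}(k + n/\sqrt{k})$ in both rounds and messages. Multiplying by the $\Theta(\log^3 n)$ outer iterations, and noting that the $O(\log n)$ simultaneously-active candidates use edge-disjoint communication patterns (so congestion is absorbed into the polylog factor), yields the claimed $\tilde{O}(n/\sqrt{k})$ rounds deterministically and $\tilde{O}(k + n/\sqrt{k})$ messages with probability $\geq 1 - 1/n$. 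The main obstacle I anticipate is the density bound $\eps_{f_u} \geq 1/k$ in the liveness argument: since the decentralized Grover of a single higher-ranked passive candidate need not hit $N(u)$ at all (consider near-star configurations), the bound must be argued globally across all higher-ranked passives, carefully exploiting diameter-$2$ connectivity and — if needed — the independent resampling offered by the $\Theta(\log^3 n)$ outer iterations.
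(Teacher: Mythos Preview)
Your high-level plan (safety for the top-ranked candidate, liveness via the unique-active event, complexity via the formula of \Cref{qwalk}) is exactly the paper's, but the complexity section contains two concrete errors that break the derivation.

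First, you have $\delta$ and $\eps$ swapped. The spectral gap of the random walk on the Johnson graph $J(\deg(v),k)$ is $\delta=\Theta(1/k)$ (for $k=o(\deg(v))$), while the marked fraction is $\eps_{f_u}\geq k/\deg(u)$: if there is even a single witness $w_0\in N(u)\cap N(v')$ for some higher-ranked passive $v'$ --- and diameter~$2$ guarantees precisely this --- then $\Pr_W[w_0\in W]=k/\deg(u)$. So the correct instantiation is $\delta=1/k$, $\eps=k/\deg(v)\approx k/n$, which is what the paper's analysis uses. In particular, the ``main obstacle'' you anticipate (proving $\eps_{f_u}\geq 1/k$ via a global argument about $R$) is an artifact of the swap; with the right target $k/\deg(u)$ a single common neighbor suffices and no global argument is needed.

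Second, the amortisation of the decentralized Grover step is invalid. Which nodes currently hold a referee rank --- and hence which neighbours of a passive $v'$ qualify --- depends on the set $W$ maintained by the walk, so the decentralized search must be rerun inside \emph{every} call to $\checking$; it contributes $M_\Cc=T_\Cc=\tilde O(\sqrt{n})$. Plugging the correct values $\eps=k/n$, $\delta=1/k$, $M_\Sc=\tilde O(k)$, $T_\Sc=O(1)$, $M_\Uc=T_\Uc=O(1)$, $M_\Cc=T_\Cc=\tilde O(\sqrt{n})$ into \Cref{qwalk} gives
\[
\tilde O\!\left(k+\sqrt{\tfrac{n}{k}}\big(\sqrt{k}+\sqrt{n}\big)\right)=\tilde O\!\left(k+\sqrt{n}+\tfrac{n}{\sqrt{k}}\right)=\tilde O\!\left(k+\tfrac{n}{\sqrt{k}}\right),
\]
recovering the stated bounds. (Note also that your $T_\Sc=O(k)$ is off: the $k$ referees are contacted in parallel over distinct edges, so $T_\Sc=O(1)$.) With your swapped parameters and the claimed amortisation, the formula would actually yield $\tilde O(k+\sqrt{n})$, not $\tilde O(k+n/\sqrt{k})$, so the derivation as written does not reach the target.
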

\begin{proof}
First, a direct inspection reveals that the candidate with the highest rank can never enter a $\nonelected$ state, since no referee will be able to collect a higher rank to contradict its election.

Second, if a unique candidate is active with a rank smaller than another (passive) candidate, then it will enter a $\nonelected$ state with high probability. The error probability is due to both \textsc{WalkSearch} in \quantumQWLE{} and the two instances of \textsc{GroverSearch} in $\checking$. Note that in \Cref{qwalk}, the procedure $\checking$ is supposed to be without error. However, since the number of its executions in \textsc{WalkSearch} is sublinear
and its error could be set to in $O(1/n^3)$, the overall error can be bounded by $O(1/n^2)$ since it accumulates linearly with the number of executions of $\checking$ (see, e.g., \cite[Box 4.1]{NC10}).

Last, we analyze the round and message complexities. 
We have $M_\Sc=\tilde{O}(k)$, $M_\Uc=\tilde{O}(1)$, $M_\Cc=\tilde{O}(\sqrt{n})$,
$T_\Sc=O(1)$, $T_\Uc=O(1)$, $T_\Cc=\tilde{O}(\sqrt{n})$,
and finally
$\eps=\frac{k}{n}$, $\delta=\frac{1}{k}$. Thus the final message complexity is
$$\tilde{O}\left(k+\sqrt{\frac{n}{k}}\big(\sqrt{k}+\sqrt{n}\big)\right)
=\tilde{O}\left(k+\sqrt{n}+\frac{n}{\sqrt{k}}\right)=\tilde{O}\left(k+\frac{n}{\sqrt{k}}\right).$$
\end{proof}

Setting $k = \Theta(n^{2/3})$ optimizes the message complexity of the above leader election protocol. 

\begin{corollary}
    \quantumQWLE{} solves implicit leader election with probability at least $1-1/n$.  Moreover, it takes $\tilde{O}(n^{2/3})$ rounds and with probability at least $1-1/n$, it sends $\tilde{O}(n^{2/3})$ messages.
\end{corollary}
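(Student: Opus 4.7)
The plan is simply to instantiate the preceding theorem at the value of $k$ that balances the two terms appearing in its message-complexity bound. Specifically, the theorem guarantees correctness with probability at least $1-1/n$, deterministic round complexity $\tilde{O}(n/\sqrt{k})$, and (whp) message complexity $\tilde{O}(k + n/\sqrt{k})$. Correctness and the success probability bound are independent of $k$, so they transfer to the corollary without further work.

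To optimize the message complexity, I would set $k = n/\sqrt{k}$, i.e., $k^{3/2} = n$, giving $k = \Theta(n^{2/3})$. Substituting back, the message bound becomes $\tilde{O}(n^{2/3} + n / n^{1/3}) = \tilde{O}(n^{2/3})$, and the round bound becomes $\tilde{O}(n / n^{1/3}) = \tilde{O}(n^{2/3})$. Any smaller choice of $k$ makes the $n/\sqrt{k}$ term dominate above $n^{2/3}$, while any larger choice makes the $k$ term itself exceed $n^{2/3}$, confirming that $k = \Theta(n^{2/3})$ is the right balance.

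There is essentially no obstacle beyond verifying that the chosen $k$ satisfies the implicit side conditions of the theorem statement (in particular, $k$ is an integer in a range compatible with the Johnson graph $J(\deg(v),k)$ and with the spectral-gap estimate $\delta \approx 1/k$ used in the algorithm). Since the analysis of \quantumQWLE{} requires only $k = o(n)$ and $k = \omega(1)$, which both hold for $k = \Theta(n^{2/3})$, there is nothing further to check. Thus the corollary follows directly.
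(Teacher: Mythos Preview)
Your proposal is correct and matches the paper's approach exactly: the paper simply states that setting $k = \Theta(n^{2/3})$ optimizes the message complexity and records the resulting corollary without further argument. Your additional sanity check that $k = \Theta(n^{2/3})$ satisfies the side conditions $k = o(n)$ and $k = \omega(1)$ is a reasonable elaboration but not something the paper spells out.
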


\subsection{Leader Election in General Graphs via Tree Merging}
\label{sec:leader-general}

Finally, we describe an (explicit) leader election algorithm for general graphs with $\tilde{O}(n)$ round complexity and $\tilde{O}(\sqrt{m n})$ message complexity. Our algorithm solves leader election by cluster merging in a fashion similar to the well-known GHS algorithm~\cite{GHS}. The key difference lies in the use of Grover search to decide which clusters should be merged, far more message-efficiently (quadratically so, in fact) than can be achieved in the classical setting. More concretely, we use Grover Search to find a cluster's outgoing edges message-efficiently. On another note, our presented algorithm generalizes straightforwardly to the minimum spanning tree (MST) problem with the same complexities, which may be of independent interest.

\subsubsection{Algorithm Description.} The algorithm \quantumGeneralLE{} works in $O(\log n)$ phases, of $O(n \log n)$ rounds each. Initially, each node is its own cluster. Then, in any phase $i \geq 1$, we merge clusters together, in a way that ensures that whenever the phase starts with multiple clusters, the phase ends with at most half of the clusters. Let $\mathcal{C}_i$ be the collection of all clusters at the start of any phase $i \geq 1$. Each phase executes the following three steps:
\begin{enumerate}
    \item Each cluster $C \in \mathcal{C}_i$ computes some outgoing edge (i.e., with one endpoint in $C$ and the other outside). This happens in two parts. First, any node $v \in C$ searches for some incident edge (if there exists one) leading outside $C$ by running $\mathsf{GroverSearch}$ with $\eps=\sqrt{deg(v)}$, $\alpha = 1/n^2$ and the function $f_v: N(V) \rightarrow \{0,1\}$ that assigns $f(w) = 0$ to any node $w \in C$, and $f(w) = 1$ to any other node in $N(V)$. Then, in the second part, any node that finds such an incident, outgoing edge convergecasts it over the cluster tree, and if multiple nodes find an outgoing edge, the convergecast transmits any arbitrary one up to the cluster center.
    \item Clusters of $\mathcal{C}_i$ simulate a maximal matching algorithm on the \emph{virtual fragment (super)graph} $\mathcal{V}_i$ whose (super)nodes are the clusters of $\mathcal{C}_i$, and the edges correspond to cluster connected by the computed outgoing edges. This can be done using the Cole-Vishkin symmetry-breaking algorithm \cite{CV86}.
    \item Clusters of $\mathcal{C}_i$ merge together using the computed maximal matching, that is, (adjacent) matched supernodes (or clusters) merge together, whereas any unmatched supernode merges with some arbitrary neighboring matched node. The resulting clustering is $\mathcal{C}_{i+1}$.
\end{enumerate}
 Finally, after all $O(\log n)$ phases, a single cluster remains. The center of that cluster becomes the leader, and broadcasts its ID to all nodes via the cluster tree.
 
\subsubsection{Analysis.} First, we prove that in this quantum variant, step (1) of each phase uses significantly fewer messages (than in the classical setting) to find an outgoing edge for each cluster (if there exists one) through the use of Grover Search. 

\begin{lemma}
    Step (1) takes $O(n \log n)$ rounds and sends $O(\sqrt{m n} \log n)$ messages. Moreover, it guarantees that (the center of) any cluster $C \in \mathcal{C}_i$ finds an outgoing edge with probability at least $1-1/n^2$.  
\end{lemma}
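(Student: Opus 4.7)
The plan is to decompose Step~(1) into a parallel batch of $\mathsf{GroverSearch}$ calls (one at every node) followed by a standard convergecast up each cluster's spanning tree. For a node $v$ searching its neighborhood $N(v)$, the checking procedure $\checking_v$ simply has $v$ send its cluster identifier to the queried neighbor $w$ and receive back the bit $f_v(w)$, using $O(1)$ rounds and $O(1)$ messages. The parameters plugged into \Cref{dgs} are $\eps = 1/\deg(v)$ and $\alpha = 1/n^2$: the former is the natural lower bound on $\eps_f = |f_v^{-1}(1)|/\deg(v)$ whenever $v$ has at least one outgoing edge.

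With this setup, both complexities follow from \Cref{dgs}. For rounds, each Grover call takes $O(\sqrt{\deg(v)}\log n) = \tilde{O}(\sqrt{n})$; all calls run in parallel, and since each call only touches edges incident to its owner, the two Grover calls sharing an edge $\{u,w\}$ travel in opposite directions and fit within CONGEST. The convergecast adds $O(n)$ rounds (cluster-tree depth is at most $n-1$), yielding $O(n\log n)$ total. For messages, the per-node cost $O(\sqrt{\deg(v)}\log n)$ sums via Cauchy--Schwarz to
\[
\sum_{v \in V} \sqrt{\deg(v)} \;\leq\; \sqrt{\,n\,\textstyle\sum_v \deg(v)\,} \;=\; O(\sqrt{mn}),
\]
and the convergecast contributes at most $O(n)$ messages (one per tree edge, with cluster trees being edge-disjoint). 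The total is $O(\sqrt{mn}\log n)$, as claimed.

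For correctness, I would fix an arbitrary cluster $C\in\mathcal{C}_i$ that contains at least one outgoing edge, and pick some $v \in C$ that is an endpoint of such an edge. Then $\eps_f \geq 1/\deg(v) = \eps$, so by \Cref{dgs} the call at $v$ returns a neighbor $w$ with $f_v(w) = 1$ with probability at least $1 - \alpha = 1 - 1/n^2$. Conditioned on this event, the convergecast simply routes the discovered outgoing edge up to the center of $C$.

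The main obstacle I anticipate is the Cauchy--Schwarz step: the naive per-node bound $\sqrt{\deg(v)}\leq\sqrt{n}$ only gives a total of $n^{3/2}$ messages, which is worse than the classical $O(m)$ baseline on sparse graphs. It is the concavity-based bound $O(\sqrt{mn})$ that produces the sublinear improvement. A secondary worry is ensuring that the parallel Grover invocations at adjacent nodes do not cause CONGEST violations, but as noted above each edge carries at most one query in each direction per round, so this is painless.
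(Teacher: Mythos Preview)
Your proposal is correct and follows essentially the same approach as the paper: per-node $\mathsf{GroverSearch}$ over the neighborhood with a two-message $\checking$, Cauchy--Schwarz to sum the $\sqrt{\deg(v)}$ costs into $O(\sqrt{mn}\log n)$, and a convergecast on the cluster tree for the remaining $O(n)$ rounds and messages. The one minor difference is in the correctness argument: you fix a single witness node $v\in C$ with an outgoing edge and use $\alpha=1/n^2$ to get the per-cluster guarantee directly, whereas the paper sets $\alpha=1/n^3$ and union-bounds over all $n$ nodes so that \emph{every} node with an outgoing edge finds one simultaneously with probability $1-1/n^2$; your version suffices for the lemma as stated and is arguably cleaner.
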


\begin{proof}
    Consider the first part, in which each node $v \in V$ runs Grover Search with $\eps= 1 / deg(v)$, $\alpha = 1/n^3$ and the function $f_v: N(V) \rightarrow \{0,1\}$ that assigns $f(w) = 0$ to any node $w \in C$, and $f(w) = 1$ to any other node in $N(V)$. This function is evaluated by simple distributed algorithm $\checking_v$ to compute for any node $w \in V$ the value of $f_v(w)$ using two rounds and two messages; that algorithm consists of $v$ sending the ID of its cluster center to $w$ in a first round, and receiving a reply with $f_v(w)$ in the second round (where $w$ checks if its cluster center's ID is the same as that of $v$'s ). Now, note that if $v$ has no incident outgoing edges, then $\varepsilon_f = |f_v^{-1}(1)| / deg(v) = 0$, otherwise $\varepsilon_f = |f_v^{-1}(1)| / deg(v) \geq 1 / deg(v)$. As $\eps_f \geq \eps$, by Theorem \ref{dgs}, node $v$ outputs some node $w \in V$ with $f_v(w) = 1$ with probability $1-\alpha=1-1/n^3$. In other words, $v$ outputs some incident edge leading outside $C$ with probability $1-1/n^3$. Thus, by a union bound over all $n$ nodes, the following statement holds with probability at least $1- 1/n^2$: any node $v \in V$ finds an outgoing incident edge, if there exists any. After which, in the second part, the convergecast transmits at least one outgoing incident edge (if there exists any) to the cluster center.

    Next, we bound the round and message complexities. By Theorem \ref{dgs}, each Grover Search has round and message complexity $O(\sqrt{\deg(v)} \cdot \log n)$. Hence, the first part has round complexity $O(\sqrt{n} \cdot \log n)$ and message complexity $O(\sum_{v \in V} \sqrt{\deg(v)} \cdot \log n) = O(\sqrt{m n} \log n)$ (by the Cauchy-Schwarz inequality). As for the second part, the convergecast takes round and message complexity $O(n)$ (as the cluster tree may have up to $n$ depth). Finally, it suffices to add up the round and message complexities.
\end{proof}

The following statement for steps (2) and (3) follows directly from the classical case, and we omit its proof.

\begin{lemma}
    Step (2) and (3) take $O(n \log^* n)$ rounds and sends $O(n \log^* n)$ messages. Moreover, these two steps guarantee that the phase ends with at most half of the clusters it starts with.
\end{lemma}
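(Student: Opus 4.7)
The plan is to leverage standard classical symmetry-breaking results on the fragment (super)graph together with straightforward tree-based communication over cluster trees, which is exactly the shape of argument used in GHS-style analyses.

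For the complexity bound, I would first observe that the virtual fragment graph $\mathcal{V}_i$ has at most $n$ supernodes and at most $n$ edges, since each cluster contributes exactly one outgoing edge chosen in step (1). On $\mathcal{V}_i$, Cole--Vishkin-style symmetry breaking (together with the observation that each supernode has out-degree one, so the edge set of $\mathcal{V}_i$ forms a pseudoforest-like structure with unique cluster IDs available) computes a maximal matching in $O(\log^* n)$ virtual rounds. Each virtual round must be simulated across the underlying graph $G$: supernodes communicate internally along their cluster trees via convergecast/broadcast, and across the $O(n)$ inter-cluster edges of $\mathcal{V}_i$. Since the cluster trees are edge-disjoint and collectively use at most $n-1$ edges of $G$, each virtual round can be executed in $O(n)$ actual rounds and $O(n)$ actual messages. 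Multiplying by the $O(\log^* n)$ Cole--Vishkin rounds gives the $O(n\log^* n)$ bound for step (2). For step (3), the merging itself is a one-shot operation: each matched pair of supernodes adopts a common cluster-center ID via broadcast through the matched outgoing edge, and each unmatched supernode attaches to an arbitrary matched neighbor. This costs an additional $O(n)$ rounds and messages, which is absorbed into the stated bound.

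For the halving property, I would appeal to the defining property of a maximal matching: every edge of $\mathcal{V}_i$ has at least one matched endpoint. When $|\mathcal{C}_i| > 1$, connectivity of $G$ together with step (1) ensures that every supernode has at least one incident edge in $\mathcal{V}_i$, so every supernode is either matched or is adjacent to a matched edge (and hence attaches to that matched neighbor in step (3)). Therefore each new cluster in $\mathcal{C}_{i+1}$ absorbs at least two elements of $\mathcal{C}_i$, yielding $|\mathcal{C}_{i+1}| \leq |\mathcal{C}_i|/2$.

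The only real subtlety, and the main technical point to verify, is that the Cole--Vishkin simulation over cluster trees respects the CONGEST constraint: each message is $O(\log n)$ bits, and no edge of $G$ is used more than $O(1)$ times per virtual round (so that pipelining over $O(\log^* n)$ rounds does not blow up either complexity). Since messages carry only cluster IDs and color updates, and each tree edge forwards $O(1)$ such values per virtual round, this is routine to check. This is the standard behavior in cluster-merging algorithms, which is why the authors omit the proof.
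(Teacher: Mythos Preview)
Your proposal is correct and fills in exactly the standard classical argument that the paper defers to (the paper explicitly omits the proof, stating it ``follows directly from the classical case''). Your decomposition into Cole--Vishkin on the pseudoforest-shaped virtual graph, $O(n)$-cost simulation of each virtual round over edge-disjoint cluster trees, and the maximal-matching-based halving argument is precisely what is intended.
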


It follows that after $O(\log n)$ phases, there remains a single cluster with high probability. The correctness of \quantumGeneralLE{} follows by a simple union bound (on the phases), and the round and message complexity upper bounds are straightforward.

\begin{theorem}
    \quantumGeneralLE{} solves explicit leader election with probability at least $1-1/n$.  Moreover, it takes $\tilde{O}(n)$ rounds and sends $\tilde{O}(\sqrt{m n})$ messages.
\end{theorem}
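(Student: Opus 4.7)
The plan is to assemble the final theorem directly from the two lemmas preceding it, together with a union bound and the $O(\log n)$ phase structure of \quantumGeneralLE{}. First, I would combine the two lemmas to summarize what happens in a single phase $i$: step (1) succeeds with probability at least $1-1/n^2$ in giving the center of every cluster $C\in\mathcal{C}_i$ a correct outgoing edge (if one exists), uses $O(n\log n)$ rounds and $O(\sqrt{mn}\log n)$ messages, while steps (2) and (3) deterministically use $O(n\log^* n)$ rounds and messages and guarantee $|\mathcal{C}_{i+1}|\le |\mathcal{C}_i|/2$ whenever $|\mathcal{C}_i|\ge 2$. Thus a single phase costs $\tilde{O}(n)$ rounds and $\tilde{O}(\sqrt{mn})$ messages.

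Next I would handle correctness across the $O(\log n)$ phases. The only randomized step is the Grover searches in step (1); applying a union bound over the $O(\log n)$ phases, and within each phase over the at most $n$ clusters (already absorbed in the per-phase $1-1/n^2$ bound), the overall failure probability of ever missing an outgoing edge is at most $O(\log n)/n^2 \le 1/n$. Conditioned on this good event, the halving guarantee of step (3) implies that after $O(\log n)$ phases a single cluster remains. At that point, its center declares itself leader and broadcasts its ID down the cluster tree, which takes $O(n)$ rounds and $O(n)$ messages since the tree has at most $n$ nodes; this gives \emph{explicit} leader election.

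Finally, I would sum complexities over the $O(\log n)$ phases plus the final broadcast: the total round complexity is $O(\log n)\cdot \tilde{O}(n) + O(n) = \tilde{O}(n)$, and the total message complexity is $O(\log n)\cdot \tilde{O}(\sqrt{mn}) + O(n) = \tilde{O}(\sqrt{mn})$, since $\sqrt{mn}\ge n$ when $G$ is connected. Combined with the $1-1/n$ correctness bound above, this yields the theorem.

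The only mild obstacle I anticipate is double-checking that congestion across simultaneous Grover searches by different nodes in step (1) does not inflate round complexity: each node $v$ runs its own Grover search locally and uses only its incident edges for the $\checking_v$ calls, so distinct searches do not compete for the same edge in the same round, and the per-phase $\tilde{O}(n)$ round bound from the lemma stands. Everything else is bookkeeping.
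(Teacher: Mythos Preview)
Your proposal is correct and follows essentially the same approach as the paper: assemble the per-phase guarantees from the two preceding lemmas, union bound the $1/n^2$ failure probability over the $O(\log n)$ phases, and sum the per-phase round and message costs. The paper's own proof is in fact just a one-line sketch of exactly this argument; your write-up is more detailed (the final broadcast cost, the $\sqrt{mn}\ge n$ observation, and the congestion check) but contains no different ideas.
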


\section{Quantum  Agreement in Complete Networks}
\label{sec:agreement}
We describe a quantum protocol for implicit agreement, assuming shared randomness (see \Cref{sec:pbs}) in complete networks of $n$ nodes.The presented protocol is a quantum boosting of the classical protocol of~\cite{AMP18}. The key differences lie in how we (1) estimate how many nodes have a certain input (or vote), and (2) detect when agreement is reached. For both, we redesign the classical protocol to leverage quantum subroutines and obtain quadratic factor improvements in the message complexity. 

Indeed, estimating the number of nodes with input (or vote) 1 within some $n$-sized universe up to some $\eps n$ additive error, for any $\eps < 1$ (and say, with constant success probability) requires sending $\Omega(1/\eps^2)$ messages in the classical setting, whereas approximate quantum counting allows us to send only $O(1/\eps)$ messages. As for detecting when agreement is reached, Grover search allows us (as with the handshake problem) to get a quadratic factor reduction in the message complexity when compared to the classical setting. Finally, roughly speaking, we balance the improved complexities of these two parts to obtain a quadratic factor improvement in the message complexity over the classical protocol  of~\cite{AMP18}, but our use of these quantum subroutines come at the cost of increased runtime. 

\subsection{Algorithm Description}

The implicit agreement protocol \quantumAgreement{} (\Cref{alg:shared-rand-i-algo})
 is separated into two phases: an estimation phase, followed by an agreement phase.
The protocol is parametrized by some variable $\eps \in [\Theta(1/n),1/20]$ and constant $ \gamma \in [0, 1/3]$, which allows for a trade off between rounds and messages. Of particular interest is when $\eps = 1/n^{1/5}$ and $\gamma = 2/15$, which optimizes the message complexity at $\tilde{\Theta}(n^{1/5})$.

\paragraph{Estimation Phase.} In this phase, which is essentially quantum, each node becomes a \emph{candidate} with probability $p = (\pCst \ln n)/ n$.
After which, each candidate node $v$ computes an estimation $q(v)$, up to $\eps$ additive error, of the fraction (denoted by $q$) of $1$s in the network. To do so, $v$ runs the distributed quantum approximate counting primitive  $\mathsf{ApproxCount}(\eps,\alpha_1)$ (\Cref{cor:approxCount}), with $\alpha_1 = 1/(2 n^2)$ error, to estimate the number of nodes with initial input 1 (for agreement), up to $\eps n$ additive error.
In this part, the distributed algorithm $\checking_g$ computes the function $g : V \rightarrow \{0,1\}$ 
defined such that $g(w) = 1$ if and only if $w$'s initial input (for agreement) is $1$, for $w \in V$. (Clearly, $\checking_g$ runs for 2 rounds and sends 2 messages.) After which, $v$ divides the output of the primitive by $n$ (which is their degree plus one) to obtain $q(v)$.

\paragraph{Agreement Phase.} In this second phase, nodes run a while loop for $\ell = O(\log n)$ iterations, at the end of which all nodes (including non candidate ones) terminate. Each iteration of that loop --- called \emph{agreement iteration} --- starts with a classical part, followed by a quantum part. The output of the quantum part (consistent among all candidates nodes with high probability) determines whether the candidates nodes exit the while loop and terminate early, or not.

In the first, classical part, the candidate nodes choose a shared random value $r$ uniformly at random in $[0,1]$ (via the shared randomness available to them). Then, each candidate node $v$ is undecided if $|q(v)-r| \leq \eps$, and otherwise becomes decided, choosing value 0 (resp., 1) if $q(v) < r - \eps$ (resp., $q(v) > r + \eps$). After which, each decided node sends (classically) a message (containing its value) to $O(n^{1/3-\gamma})$ neighbors, chosen arbitrarily, which sets up the quantum part.

In the second, quantum part, each undecided (candidate) node $u$ checks the existence (or not) of a decided node via (distributed) Grover Search. To do so, $u$ runs $\mathsf{GroverSearch}(\eps_2,\alpha_2)$ with $\eps_2 = n^{-2/3-\gamma}$, $\alpha_2 = 1/(4 n^3)$ and where (the distributed algorithm) $\checking_h$ computes function $h : V \rightarrow \{0,1\}$, where for any node $w \in V$, $h(w) = 1$ if and only if $w$ received a message from some decided node in the classical part (of this iteration). (Clearly, $\checking_h$ runs for 2 rounds and sends 2 messages.) After which, decided nodes terminate, while undecided nodes terminate or not depending on the output of the Grover Search. More concretely, if undecided node $u$ finds a node $w \in V$ such that $h(w) = 1$, then node $u$ terminates the agreement protocol. Else, $u$ starts the next iteration of the while loop. 

\begin{algorithm}[ht]\small
\caption{(\quantumAgreement{}) Quantum agreement protocol for complete networks}
\label{alg:shared-rand-i-algo}

\begin{algorithmic}[1]

\Require A complete $n$-node anonymous network where each node receives a value in $\{0, 1\}$ given by an adversary. The nodes have access to an unbiased global coin that is oblivious to the adversary.
\Ensure Implicit agreement.
\smallskip
\item[\textbf{Estimation Phase:}]
\State (Classical) Each node elects itself as a candidate node with probability $\frac{12\ln n}{n}$.
 
\State \textbf{Counting Step (Quantum):} Each candidate node $v$ runs $\mathsf{ApproxCount}( \eps, 1/(2 n^2))$  to estimate the number of nodes with initial input 1 (for agreement), up to $\eps n$ additive error.

\State (Classical) Each candidate node $v$ divides the output of the counting primitive by $n$ to obtain $q(v)$ which is an estimate of the fraction of $1$s in the network, up to $\eps$ additive error.

\smallskip

\item[\textbf{Agreement Phase:}] 

\For{$i =1, 2, \dots$}

\State (Classical) Each candidate node uses the  global coin to generate a (shared) random number $r$ in $[0,1]$. \label{stp:verification} 
\State  (Classical) Each candidate node $v$ is {\em undecided} if $|q(v) - r| \leq  \eps$. 
Otherwise  it becomes decided --- choosing value 0 (resp., 1) 
if $q(v) < r - \eps$ (resp., $q(v) > r + \eps$).

\State (Classical) Each decided node sends  a message (containing its value) to $O(n^{1/3-\gamma})$ neighbors, chosen arbitrarily.
\smallskip

\State \textbf{Verification step (Quantum):}
 Each undecided candidate node $u$ runs $\mathsf{GroverSearch}(n^{-2/3-\gamma},1/(4 n^3))$ to check for the existence (or not) of a decided node.
\If{undecided node $u$ finds a decided node $w \in V$} 
\State Stop and exit from the `for-loop'   
\EndIf \label{stp:verification-end}
\EndFor
\State All the candidate nodes (both decided and undecided) know the $deciding\_value$. 
\end{algorithmic}
\end{algorithm}

\subsection{Analysis}

Let $C$ the set of candidate nodes and $q$ the fraction of $1$s in the network. Note that by \Cref{obs:sampleAndUniqueRanks} in Subsection \ref{subsec:randomizedPrelims}, we know $1 \leq |C| \leq \pMax \ln n$ with probability at least $1-1/n^2$. 

We start with a lemma capturing the guarantees of the estimation phase (roughly speaking, its correctness, time and message complexities). Moreover, let $\est$ be the event that the candidates' estimates of the fraction of $1$s in the network are within an $\eps$ additive error. We also show that $\est$ happens with high probability, and independently of (the random variable) $|C|$.

\begin{lemma}
\label{lem:candidateEstimates}
    The estimation phase takes $\tilde{O}(1/\eps)$ rounds. Moreover, $\tilde{O}(1/\eps)$ messages are sent both in expectation and with probability at least $1-1/n$. Finally, $\est$ happens with probability at least $1-1/(2n)$, and independently of $|C|$.
\end{lemma}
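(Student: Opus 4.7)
The plan is to establish the three claims separately: the round and message complexities by invoking Corollary \ref{cor:approxCount} together with the candidate-count estimates from \Cref{obs:sampleAndUniqueRanks}, and the probability and independence claim on $\est$ via a union bound whose guarantee is uniform in $|C|$.

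First, I would observe that each candidate $v \in C$ runs $\mathsf{ApproxCount}(\eps, 1/(2n^2))$ whose checking subroutine $\checking_g$ consists of a single round-trip between $v$ and the queried node, so $T_\Cc = M_\Cc = 2$. By Corollary \ref{cor:approxCount}, one such run takes $O((\log n)/\eps) = \tilde{O}(1/\eps)$ rounds and sends $\tilde{O}(1/\eps)$ messages. All candidates execute in parallel without CONGEST contention, since each run is centralized at its candidate and uses only edges incident to that candidate. This yields the $\tilde{O}(1/\eps)$ round bound deterministically. For messages, the total across all candidates equals $|C|$ times the per-candidate cost; since $\E[|C|] = O(\log n)$ the expectation is $\tilde{O}(1/\eps)$, and since $|C| \leq 24 \ln n$ with probability at least $1 - 1/n^2$ by \Cref{obs:sampleAndUniqueRanks}, the high-probability bound of $\tilde{O}(1/\eps)$ messages (with probability at least $1 - 1/n$) follows.

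For the $\est$ event, I would apply Corollary \ref{cor:approxCount} to each candidate: its output $\tilde{t}_g$ satisfies $|\tilde{t}_g - qn| < \eps n$ with probability at least $1 - 1/(2n^2)$, hence $|q(v) - q| \leq \eps$ after dividing by $n$. The crux of the independence claim is to note that the quantum randomness inside each candidate's $\mathsf{ApproxCount}$ is independent of both the classical coin flips determining $C$ and the randomness of the other candidates' runs. Consequently, conditional on any realization $|C| = k \leq n$, a union bound over the $k$ candidates yields $\Pr(\est \mid |C| = k) \geq 1 - k/(2n^2) \geq 1 - 1/(2n)$. This bound is uniform in $k$, simultaneously establishing $\Pr(\est) \geq 1 - 1/(2n)$ and the independence-from-$|C|$ guarantee.

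The main subtlety will be justifying that the parallel $\mathsf{ApproxCount}$ executions by distinct candidates can be treated as independent events, both probabilistically (for the union bound) and operationally (no edge congestion in the CONGEST model). Both follow because each candidate's centralized execution uses fresh ancillas and only edges incident to itself, but this point should be stated carefully when formalizing the independence argument.
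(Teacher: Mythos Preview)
Your proposal is correct and follows essentially the same approach as the paper: invoke \Cref{cor:approxCount} for the per-candidate round and message cost of $\mathsf{ApproxCount}$, combine with the candidate-count bound from \Cref{obs:sampleAndUniqueRanks} for the total message complexity, and obtain the $\est$ guarantee via a union bound over at most $n$ candidates whose uniformity in $|C|$ yields the independence claim. Your treatment is slightly more explicit than the paper's (you spell out the conditioning on $|C|=k$ and the no-contention argument), but the underlying argument is the same.
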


\begin{proof}
    In the estimation phase, each candidate node $v$ runs $\mathsf{ApproxCount}(\eps,\alpha_1)$, with $\alpha_1 = 1/(2 n^2)$ and function $g$. 
    From the definition, $|g^{-1}(1)|$ is the number of $1$s in the network and thus $|g^{-1}(1)| = q n$. Then, by \Cref{cor:approxCount}, the output $n(v)$ satisfies $|n(v)- q n| \leq \eps n$ with probability at least $1-\alpha_1$. Since the output is then divided by $n$, $v$ computes $q(v)$ such that $|q(v)-q| \leq \eps$ with probability at least $1-\alpha_1$. By a simple union bound over the (at most $n$) candidate nodes, it follows that all candidates nodes' estimates lie in $[q-\eps,q+\eps]$ with probability at least $1-1/(2 n)$. Moreover, this event happens independently of any probability statement on $|C|$, due to the crude union bound.

    Now, we prove the time and message complexities. By \Cref{cor:approxCount}, $\mathsf{ApproxCount}(\eps,\alpha_1)$ takes $O(\log(1/\alpha_1)/\eps) = \tilde{O}(1/\eps)$ rounds and messages. Since there are $O(\log n)$ candidates with probability at least $1-1/n^2$, and at most $n$ otherwise, the (expected and with high probability) message complexity upper bounds follows.
\end{proof}

Next, we bound the round complexity and expected message complexity of the agreement phase (see Lemma \ref{lem:candidateAgreement}). For that purpose, we prove the following auxiliary lemma (see Lemma \ref{lem:NoDecidedNode}) showing that any agreement iteration contains undecided candidate nodes with probability at most $O(\eps)$, when $\est$ holds true. The obtained probability crucially balances out the fact that any undecided candidate runs a Grover Search that sends $\tilde{O}(n^{1/3+\gamma/2})$ messages, which leads to our claimed expected message complexity bounds.

\begin{lemma}
\label{lem:NoDecidedNode}
    Given $\est$ holds true, then for any agreement iteration, all candidate nodes become decided (equivalently, no candidate node is undecided) with probability at least $1 - 4 \eps$.
\end{lemma}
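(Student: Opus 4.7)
The plan is to reason about the randomness coming from the shared random value $r \in [0,1]$, conditional on the event $\est$ which fixes each candidate's estimate within an $\eps$-window of the true fraction $q$ of $1$s in the network.

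First, I would rephrase the ``decided'' condition in a usable form: a candidate $v$ is undecided in the current iteration if and only if $|q(v) - r| \leq \eps$, equivalently $r \in [q(v)-\eps,\, q(v)+\eps]$. Hence, at least one candidate is undecided if and only if the random value $r$ lies in the union of intervals
\[
I \;=\; \bigcup_{v \in C} \bigl[q(v)-\eps,\, q(v)+\eps\bigr].
\]

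Next, I would use $\est$ to bound $I$. Under $\est$ we have $q(v) \in [q-\eps,\, q+\eps]$ for every candidate $v$, so each interval $[q(v)-\eps,\, q(v)+\eps]$ is contained in $[q-2\eps,\, q+2\eps]$, and therefore $I \subseteq [q-2\eps,\, q+2\eps]$. This is the one step where the estimation guarantee from \Cref{lem:candidateEstimates} is used, and it is the whole content of the argument; the key insight is simply that the $\eps$-undecidedness window around each $q(v)$, once the $q(v)$'s are $\eps$-close to $q$, is jointly contained in a $4\eps$-long interval around $q$.

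Finally, I would conclude using the distribution of $r$. Since $r$ is drawn uniformly from $[0,1]$ and is independent of the values of $q(v)$ (the latter are determined before the iteration, while $r$ is freshly sampled via the shared coin), the probability that $r$ lands in any interval of length at most $4\eps$ is at most $4\eps$. Therefore
\[
\Pr\bigl[\text{some candidate is undecided} \mid \est\bigr] \;\leq\; \Pr[r \in [q-2\eps,\, q+2\eps]] \;\leq\; 4\eps,
\]
which yields the claimed lower bound of $1 - 4\eps$ on the probability that every candidate is decided. There is no real obstacle here; the only subtle point is making explicit that $r$ is independent from the $q(v)$ (which follows from the protocol ordering: the estimation phase precedes the drawing of $r$) so that the uniform-measure bound on an interval of length $4\eps$ applies directly.
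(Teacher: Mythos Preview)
Your proof is correct and follows essentially the same approach as the paper: both argue that under $\est$ the undecided region for $r$ is contained in the interval $[q-2\eps,\,q+2\eps]$ of length $4\eps$, and then use uniformity of $r$ on $[0,1]$ to bound the probability by $4\eps$. Your version is slightly more explicit (writing the union of intervals and noting the independence of $r$ from the $q(v)$), but the argument is the same.
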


\begin{proof}
    In any agreement iteration, the candidate nodes choose a shared random value $r$ uniformly at random in $[0,1]$. Now, all candidate nodes become decided in this iteration (or in other words, none is undecided) if for any candidate node $v$, it holds that $|q(v)-r| > \eps$.
    Since $\est$ holds true, all candidate estimates are contained in $[q-\eps,q+\eps]$. As a result, all candidate nodes become decided (in this iteration) if $|q-r| > 2\eps$, or in other words, if $r$ is not chosen in a strip of length $4 \eps$ centered at $q$. This happens with probability at least $1- 4\eps$.
\end{proof}

\begin{lemma}
\label{lem:candidateAgreement}
    The agreement phase takes $\tilde{O}(n^{1/3+\gamma/2})$ rounds. Moreover, it uses in expectation $\tilde{O}(n^{1/3-\gamma}+\eps \cdot n^{1/3+\gamma/2})$ messages.
\end{lemma}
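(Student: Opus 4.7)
The plan is to handle the round complexity deterministically by bounding the cost of a single iteration and multiplying by the $\ell = O(\log n)$ loop limit, then to handle the expected message complexity by carefully conditioning on the event $\est$ (and using Lemma \ref{lem:NoDecidedNode}) to bound the expected number of candidates that actually invoke the Grover-search subroutine per iteration.

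\textbf{Round complexity.} The classical part of any iteration consists of a constant number of rounds for the shared-coin step, the decision comparison, and the decided nodes sending messages to their $O(n^{1/3-\gamma})$ chosen neighbors. The quantum part is a single invocation of $\mathsf{GroverSearch}(n^{-2/3-\gamma},1/(4n^3))$ per undecided candidate, where the checker $\checking_h$ uses $T_\Cc = O(1)$ rounds. By Theorem \ref{dgs}, each such invocation takes $O(\log(1/\alpha_2)\cdot T_\Cc / \sqrt{\eps_2}) = \tilde{O}(n^{1/3+\gamma/2})$ rounds. Since each candidate runs its own search in parallel and no edge congestion arises (different candidates use disjoint sets of outgoing edges), the iteration's round complexity is $\tilde{O}(n^{1/3+\gamma/2})$ deterministically, and multiplying by $\ell = O(\log n)$ gives the stated round bound.

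\textbf{Message complexity, conditioned on $\est$ and $|C|\le\pMax\ln n$.} By Observation \ref{obs:sampleAndUniqueRanks}, $|C|=O(\log n)$ with probability at least $1-1/n^2$. Conditioning on this, each iteration's classical part sends at most $|C|\cdot O(n^{1/3-\gamma}) = \tilde{O}(n^{1/3-\gamma})$ messages, contributing $\tilde{O}(n^{1/3-\gamma})$ over all $\ell$ iterations. By Theorem \ref{dgs}, each invocation of $\mathsf{GroverSearch}$ uses $\tilde{O}(M_\Cc/\sqrt{\eps_2}) = \tilde{O}(n^{1/3+\gamma/2})$ messages. The key step is to bound the expected number of undecided candidates per iteration: by Lemma \ref{lem:NoDecidedNode}, conditioning on $\est$, the probability that there exists any undecided candidate in a given iteration is at most $4\eps$ (taken over the shared random coin $r$, independently of prior iterations). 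Hence the expected number of Grover invocations per iteration is at most $4\eps\cdot|C| = \tilde{O}(\eps)$, yielding $\tilde{O}(\eps\cdot n^{1/3+\gamma/2})$ expected quantum messages per iteration, and $\tilde{O}(\eps\cdot n^{1/3+\gamma/2})$ over all $\ell$ iterations.

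\textbf{Handling the bad events.} The main subtlety is to ensure that the low-probability bad events do not inflate the expectation. When $\est$ fails (probability $\le 1/(2n)$ by Lemma \ref{lem:candidateEstimates}), or when $|C|>\pMax\ln n$ (probability $\le 1/n^2$), we bound the messages crudely: every node could be a candidate and could be undecided in every iteration, giving at most $n\cdot\ell\cdot\tilde{O}(n^{1/3+\gamma/2}) = \tilde{O}(n^{4/3+\gamma/2})$ messages. Multiplying by the failure probability $O(1/n)$, these contributions are $\tilde{O}(n^{1/3+\gamma/2})$, which, using $\eps\ge\Theta(1/n)$, is absorbed into the $\tilde{O}(\eps\cdot n^{1/3+\gamma/2})$ term. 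Summing the classical and quantum contributions yields the claimed expected message complexity $\tilde{O}(n^{1/3-\gamma}+\eps\cdot n^{1/3+\gamma/2})$. The only delicate point is independence of the $4\eps$ bound from $|C|$ and the event $\est$, which follows because Lemma \ref{lem:candidateEstimates} asserts $\est$ holds independently of $|C|$ and Lemma \ref{lem:NoDecidedNode} only uses the freshly drawn shared random coin $r$.
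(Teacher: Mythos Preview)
Your overall approach mirrors the paper's: bound one iteration and multiply by $\ell=O(\log n)$; split the classical and quantum parts; use Lemma~\ref{lem:NoDecidedNode} to say that, conditioned on $\est$, the probability an iteration has any undecided candidate is at most $4\eps$; and handle the low-probability bad events separately. That is the right skeleton.

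There is, however, a genuine gap in your ``handling the bad events'' paragraph. You lump together the event $\{\est\text{ fails}\}$ (probability $\le 1/(2n)$) and the event $\{|C|>\pMax\ln n\}$ (probability $\le 1/n^2$), and in both cases use the crude bound ``every node could be a candidate and could be undecided,'' i.e.\ up to $n$ Grover invocations per iteration. That gives a contribution of order $(1/n)\cdot n\cdot \tilde{O}(n^{1/3+\gamma/2})=\tilde{O}(n^{1/3+\gamma/2})$ to the expected message count. You then claim this is absorbed into $\tilde{O}(\eps\cdot n^{1/3+\gamma/2})$ ``using $\eps\ge\Theta(1/n)$,'' but that implication goes the wrong way: with $\eps=\Theta(1/n)$ one has $\eps\cdot n^{1/3+\gamma/2}=\Theta(n^{-2/3+\gamma/2})$, which is far smaller than $n^{1/3+\gamma/2}$. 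So the absorption fails and your stated bound does not follow.

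The fix is exactly what the paper does: treat the two bad events separately rather than via a union. When $|C|>\pMax\ln n$ (probability $\le 1/n^2$), the crude bound of $n$ undecided candidates contributes at most $n/n^2=1/n$ to the expected number of Grover invocations per iteration. When $\est$ fails but $|C|\le\pMax\ln n$ (probability $\le 1/(2n)$), there are at most $O(\log n)$ undecided candidates, contributing $\tilde{O}(1/n)$. Both terms are $\tilde{O}(1/n)=\tilde{O}(\eps)$ by the hypothesis $\eps=\Omega(1/n)$, and now the absorption into $\tilde{O}(\eps\cdot n^{1/3+\gamma/2})$ is legitimate. You even note yourself that $\est$ holds independently of $|C|$, which is precisely what makes this finer decomposition work; you just did not exploit it.
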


\begin{proof}
    By the algorithm description, the agreement phase takes $\ell = O(\log n)$ iterations. Hence, in what follows it suffices to bound the round and expected message complexity of each iteration. We first bound these for the classical part. In it, each decided candidate node sends $O(n^{1/3-\gamma})$ messages in a single round. By \Cref{obs:sampleAndUniqueRanks}, there are $O(\log n)$ candidates with probability at least $1-1/n^2$. Thus, for each iteration, the classical part takes $1$ round (deterministically) and $\tilde{O}(n^{1/3-\gamma})$ messages (in expectation).

    It remains to bound the complexity for the quantum part. 
    By Theorem \ref{dgs}, each undecided node's Grover Search takes $\tilde{O}(1/\sqrt{\eps_2}) = \tilde{O}(n^{1/3+\gamma/2})$ rounds and messages. This bounds the round complexity of the quantum part, but we now need to bound its (expected) message complexity, which we do by bounding the expected number of undecided candidate nodes. First, by \Cref{obs:sampleAndUniqueRanks}, the number of candidates is $O(\log n)$ with probability at least $1-1/n^2$, and at most $n$ otherwise (the latter event is denoted by $A$). Moreover, it holds independently by Lemma \ref{lem:candidateEstimates} that $\est$ holds true with probability at least $1-1/n$, in which case Lemma \ref{lem:NoDecidedNode} implies that there are no undecided candidate nodes (independently of how many nodes are candidates) with probability at least $1 - 4 \eps$. We denote this last event, leading to no undecided candidates nodes, by $B$. We now consider three mutually exclusive events:
    \begin{itemize}
        \item If $A$ holds --- which happens with probability $p_1 = 1/n^2$ --- there are at most $n$ undecided candidates,
        \item If $\bar{A} \cap B$ holds, there are no undecided candidates,
        \item Otherwise --- which happens with probability at most $p_2 = (1-1/n^2) (1/n + (1-1/n) (4\eps))$ --- there are at most $O(\log n)$ undecided candidates.
    \end{itemize}  
    Now, it is clear that the expected number of undecided candidates per iteration is at most 
    $$p_1 n + p_2 \cdot O(\log n) = \tilde{O}(\eps)$$  where we use the fact that $\eps = \Omega(1/n)$.
    Therefore, the expected message complexity of any iteration's quantum part is $\tilde{O}(\eps \cdot n^{1/3+\gamma/2})$. 
\end{proof}

Now, we can show that if $\est$ holds true (which happens with high probability), the agreement phase produces a valid (implicit) agreement output with high probability (see Lemma \ref{lem:agreementPhaseCorrectness}). To do so, we first prove two auxiliary lemmas. The first --- see Lemma \ref{lem:agreementCorrectness} --- implies that any agreement iteration in which there is some decided candidate node leads to the termination of all candidate nodes with high probability, when $\est$ holds true. The second --- see Lemma \ref{lem:noDisagreement} --- proves that any two candidates nodes that become decided in the same agreement iteration must agree on their decided value. 

\begin{lemma}
\label{lem:agreementCorrectness}
    Let $\est$ hold true. For any agreement iteration, 
    if at least one candidate is decided, then with probability $1-1/(4 n^2)$, all undecided candidate nodes detect a decided node and all candidate nodes (both undecided and decided) terminate.
\end{lemma}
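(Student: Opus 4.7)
The plan is to apply Theorem~\ref{dgs} to each undecided candidate's verification step, after establishing that the density of ``marked'' nodes---those that received a message from a decided candidate---is high enough for Grover Search to succeed. The hypothesis $\est$ enters this lemma only through its position in the chain of lemmas building toward the main theorem; for the present argument, only the existence of at least one decided candidate is actually used.

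First I would establish the density condition. In the classical part of the iteration, the (at least one) decided candidate sends its value to $\Omega(n^{1/3-\gamma})$ distinct neighbors, so the function $h \colon V \to \{0,1\}$ implemented by $\checking_h$ in the verification step satisfies $|h^{-1}(1)| \geq c \cdot n^{1/3-\gamma}$ for some constant $c > 0$. Hence the density of marked inputs is $\eps_h = |h^{-1}(1)|/n \geq c \cdot n^{-2/3-\gamma}$, and by choosing the hidden constant in $O(n^{1/3-\gamma})$ appropriately (i.e., at least $1/c$) one ensures $\eps_h \geq \eps_2 = n^{-2/3-\gamma}$, which is the density threshold required by Theorem~\ref{dgs}.

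Next, for each undecided candidate $u$, I would invoke Theorem~\ref{dgs} on its execution of $\mathsf{GroverSearch}(\eps_2,\alpha_2)$ with $\alpha_2 = 1/(4n^3)$. Since $\eps_h \geq \eps_2$, the procedure returns a node $w \in V$ with $h(w) = 1$ with probability at least $1 - \alpha_2 = 1 - 1/(4n^3)$; in that case $u$ detects a decided node and exits the for-loop.

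Finally, since the number of undecided candidates is at most $n$, a union bound across them yields that every undecided candidate simultaneously succeeds in detecting a decided node with probability at least $1 - n \cdot 1/(4n^3) = 1 - 1/(4n^2)$, matching the claimed bound. Under this joint success event, all undecided candidates exit the for-loop, while the decided candidates terminate by the algorithm description, so every candidate node terminates. The main care required is making the constants line up so that $\eps_h \geq \eps_2$ holds with exact (not merely asymptotic) constants, and using the crude worst-case upper bound of $n$ on the number of undecided candidates---rather than the tighter $O(\log n)$ bound that only holds with high probability---so that the union bound cleanly yields the stated $1/(4n^2)$ failure probability.
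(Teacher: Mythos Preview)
Your proof is correct and follows essentially the same approach as the paper: establish the density lower bound $\eps_h \geq \eps_2$ from the classical messaging step, apply Theorem~\ref{dgs} to each undecided candidate's Grover Search, and union bound over at most $n$ candidates. Your observation that the hypothesis $\est$ is not actually used in this particular argument is accurate and matches the paper's proof, which likewise never invokes it.
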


\begin{proof}
    Any agreement iteration starts with all decided candidate nodes sending messages to $n^{1/3-\gamma}$ neighbors during the classical part. 
    After which, undecided nodes use Grover Search during the quantum part of that same iteration, with parameters $\eps_2 = n^{-2/3-\gamma}$, $\alpha_2 = 1/(4n^3)$ and function $h$. By definition of $h$, if there are no decided nodes after the classical part of this iteration, then $\eps_f = |h^{-1}(1)|/n = 0$, and otherwise $\eps_f = |h^{-1}(1)|/n \geq n^{1/3-\gamma}/n = n^{-2/3-\gamma}$. As $\eps_f \geq \eps_2$, by Theorem \ref{dgs}, any undecided node $u$ outputs some node $w \in V$ such that $h(w) = 1$ with probability at least $1-\alpha_2$. 
    By a simple union bound over the (at most $n$) candidate nodes, all undecided nodes detect the existence of some decided node with probability at least $1-1/(4n^2)$, and thus terminate the agreement protocol at the end of this agreement iteration.
\end{proof}

\begin{lemma}
\label{lem:noDisagreement}
    If $\est$ holds true, then any two candidate nodes that become decided in the same agreement iteration agree on the same value.
\end{lemma}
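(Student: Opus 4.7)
The plan is to argue by contradiction, leveraging the fact that under $\est$ every candidate's estimate lies within an $\eps$-window around the true fraction $q$ of $1$s. Since both decided candidates use the same shared random threshold $r$ (generated in Step~\ref{stp:verification} of the current agreement iteration via the global coin), any disagreement between them must arise from $q(u)$ and $q(v)$ lying on opposite sides of the $2\eps$-wide undecided strip centered at $r$. We will show this is geometrically impossible once both estimates are within $\eps$ of $q$.

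More concretely, let $u$ and $v$ be two candidates that both become decided in the same iteration, and suppose toward contradiction that $u$ decides $0$ while $v$ decides $1$. By the decision rule, this means $q(u) < r - \eps$ and $q(v) > r + \eps$. Assuming $\est$ holds, we additionally have $|q(u) - q| \leq \eps$ and $|q(v) - q| \leq \eps$. Combining these inequalities yields
\[
q \;\leq\; q(u) + \eps \;<\; r \;<\; q(v) - \eps \;\leq\; q,
\]
a contradiction. The symmetric case (where $u$ decides $1$ and $v$ decides $0$) is handled identically by swapping the roles of $u$ and $v$. Hence, any two decided candidates in the same iteration must choose the same value.

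There is no real obstacle here once the definitions are unwound: the argument is simply a triangle-inequality check aligning the $\eps$-error band of the estimator with the $\eps$-slack in the decision rule. The factor of $\eps$ on each side is precisely what guarantees that the undecided strip $[r-\eps,r+\eps]$ is wide enough to absorb any estimation error under $\est$, so that no two candidates can land on opposite sides of $r$ while simultaneously being decided.
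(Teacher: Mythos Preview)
Your proof is correct and follows essentially the same approach as the paper: a contradiction argument combining the decision rule ($q(u) < r - \eps$, $q(v) > r + \eps$) with the $\est$ guarantee ($|q(u)-q|,|q(v)-q|\le\eps$). The only cosmetic difference is that the paper phrases the contradiction as $|q(v)-q(u)|>2\eps$ while both lie in $[q-\eps,q+\eps]$, whereas you write out the chain $q<r<q$ directly.
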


\begin{proof}
    By contradiction, let $u$ and $v$ be two candidate nodes that decide respectively on 0 and 1, in some agreement iteration.
    Since $\est$ is true, both estimates $q(u)$ and $q(v)$ are contained in $[q-\eps,q+\eps]$ by the end of the estimation phase. However, by the algorithm description, these two estimates must satisfy $q(u) < r - \eps$ and $q(v) > r + \eps$, where $r$ is the shared random value chosen in this agreement iteration. Since this implies $|q(v) - q(u)| > 2 \eps$, we obtain a contradiction.
\end{proof}

\begin{lemma}[Valid Agreement Output]
\label{lem:agreementPhaseCorrectness}
    If $\est$ holds true, then with probability at least $1-1/(2n)$, at least one node becomes decided within the $\ell = O(\log n)$ agreement iterations, and all decided nodes agree on the same value.
\end{lemma}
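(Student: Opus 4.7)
The plan is to show, conditional on $\est$, two complementary facts: (a) within $\ell=\Theta(\log n)$ agreement iterations, with high probability at least one candidate becomes decided in some iteration; (b) as soon as the first decision occurs, all undecided candidates detect a decided node and terminate within that same iteration, which by \Cref{lem:noDisagreement} forces agreement on a single value. Combining these via a union bound yields the claimed $1-1/(2n)$ success probability.

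For (a), I would analyze a single iteration. Since $\est$ holds, every $q(v)$ lies in $[q-\eps,q+\eps]$, and a candidate $v$ is undecided exactly when $r\in[q(v)-\eps,q(v)+\eps]$. Hence all candidates are simultaneously undecided only if $r$ falls in the intersection $\bigcap_{v\in C}[q(v)-\eps,q(v)+\eps]$, an interval of length at most $4\eps$. Since $r$ is uniform in $[0,1]$, this has probability at most $4\eps\le 1/5$, and because the iterations draw independent shared randomness, the probability that no candidate is decided in any of the $\ell$ iterations is at most $(4\eps)^\ell$; choosing the constant in $\ell=\Theta(\log n)$ appropriately drives this below $1/(4n)$. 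Let $E_2$ denote this good event; on $E_2$ there exists a first iteration $i^\star\le\ell$ containing at least one decision.

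For (b), I would invoke \Cref{lem:agreementCorrectness} in each iteration: whenever some candidate is decided in that iteration, all undecided candidates detect it and terminate at the end of the iteration, except with probability at most $1/(4n^2)$. A union bound over the $\ell$ iterations defines an event $E_1$ of probability at least $1-\ell/(4n^2)\ge 1-1/(4n)$, on which whenever the first decision occurs the protocol halts at the end of that iteration. Therefore, on $E_1\cap E_2$, every decision in the protocol is made in the single iteration $i^\star$, and \Cref{lem:noDisagreement} guarantees that all such decisions coincide. A final union bound gives $\Pr[E_1\cap E_2\mid\est]\ge 1-1/(2n)$.

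The main subtlety is ruling out a scenario where some candidate decides in an iteration strictly later than $i^\star$ with a value potentially inconsistent with the decisions at $i^\star$; this is precisely what $E_1$ forbids, since it forces every undecided candidate to exit the for-loop at $i^\star$. The remaining work is routine: tuning the constant in $\ell=\Theta(\log n)$ and using $\eps\le 1/20$ so that both $(4\eps)^\ell$ and $\ell/(4n^2)$ are at most $1/(4n)$.
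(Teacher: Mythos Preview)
Your proposal is correct and follows essentially the same approach as the paper: bound the probability that no candidate decides in any iteration by $(4\eps)^\ell\le 1/(4n)$, union-bound the failure of \Cref{lem:agreementCorrectness} over the iterations to get another $1/(4n)$, and then invoke \Cref{lem:noDisagreement} to conclude that all decisions (which must occur in the single iteration $i^\star$) coincide. The only cosmetic differences are that you argue the $4\eps$ bound directly rather than citing \Cref{lem:NoDecidedNode}, and you take the union bound over $\ell$ iterations where the paper takes it over $n$.
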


\begin{proof}
    Since $\est$ holds true, then it holds by Lemma \ref{lem:NoDecidedNode} that for any agreement iteration, all nodes (and thus at least one) become decided with probability at least $1- 4\eps$. Since the probability of all nodes becoming decided is independent over different iterations, we get that no candidate node becomes decided within the $\ell$ phases with probability at most $(4 \eps)^{\ell}$. Since $\eps \leq 1/20$, then $(4 \eps)^{\ell} \leq (1/5)^{\ell} \leq 1/(4n)$ for a well-chosen $\ell = O(\log n)$. 
    
    Next, we show that all decided nodes agree on the same value. By Lemma \ref{lem:agreementCorrectness}, with probability at least $1-1/(4n^2)$, for any agreement iteration in which there exists at least one decided node, all candidate nodes terminate by the end of that iteration. A union bound shows that with probability at least $1-1/(4n)$, for any agreement iteration (within the first $n$) in which at least one node becomes decided then all candidate nodes terminate by the end of that iteration. This implies that all nodes that become decided do so in the same agreement iteration, with probability at least $1-1/(4n)$. And thus, by Lemma \ref{lem:noDisagreement}, all nodes that become decided agree on the same value, with probability at least $1-1/(4n)$.
    
    Thus, by the end of the $\ell$ iterations, all (candidate and non-candidate) nodes terminate, and all decided nodes agree on the same value, with probability at least $1-1/(2n)$.
\end{proof}

Finally, we can prove our main result regarding our quantum implicit agreement protocol. Note that the theorem below is parametrized by $\eps \in [\Theta(1/n),1/20]$ and $ \gamma \in [0, 1/3]$.

\begin{theorem}
    \quantumAgreement{} solves implicit agreement with probability at least $1-1/n$. Moreover, the protocol takes $\tilde{O}(1/\eps + n^{1/3+\gamma/2})$ rounds and sends in expectation $\tilde{O}(1/\eps+n^{1/3-\gamma}+\eps \cdot n^{1/3+\gamma/2})$ messages.
\end{theorem}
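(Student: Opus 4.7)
The final theorem is essentially a compilation of the preceding lemmas, so my plan is to glue them together via a union bound for the correctness argument and a linearity-of-expectation step for the message cost, with one short additional verification for the input-validity requirement of implicit agreement.

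For correctness, I would first invoke Lemma~\ref{lem:candidateEstimates} to conclude that the event $\est$ (all candidate estimates $q(v)$ lie within $\eps$ of the true fraction $q$) holds with probability at least $1 - 1/(2n)$. Conditioned on $\est$, Lemma~\ref{lem:agreementPhaseCorrectness} guarantees that, with probability at least $1 - 1/(2n)$, at least one node becomes decided within the $\ell = O(\log n)$ agreement iterations and all decided nodes choose a common value. A union bound yields overall success probability at least $1 - 1/n$. The implicit-agreement specification additionally requires the common decided value to be the initial input of some node, which I would verify by a short case analysis under $\est$: if $q = 0$ then every $q(v) \leq \eps$, so the predicate $q(v) > r + \eps$ is false for every $r \in [0,1]$ and no node can decide $1$; the case $q = 1$ is symmetric; and for $0 < q < 1$ both values are valid inputs.

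For the round complexity, I would simply add the deterministic $\tilde{O}(1/\eps)$ bound from Lemma~\ref{lem:candidateEstimates} to the $\tilde{O}(n^{1/3+\gamma/2})$ bound from Lemma~\ref{lem:candidateAgreement}. For the expected message complexity, linearity of expectation over the two phases combines the $\tilde{O}(1/\eps)$ contribution from the estimation phase with the $\tilde{O}(n^{1/3-\gamma} + \eps \cdot n^{1/3+\gamma/2})$ contribution from the agreement phase, producing the claimed bound.

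The hard technical work has already been absorbed into Lemmas~\ref{lem:candidateEstimates}--\ref{lem:agreementPhaseCorrectness}, where the delicate balancing argument lives: each undecided candidate in an agreement iteration runs a $\tilde{O}(n^{1/3+\gamma/2})$-message Grover search, and this is affordable in expectation only because $\Pr[\text{some undecided candidate}] = O(\eps)$ under $\est$. For the theorem itself I therefore do not anticipate a genuine obstacle; the only point requiring care beyond bookkeeping is the input-validity check above, which depends on $r$ being drawn from $[0,1]$ rather than a wider interval.
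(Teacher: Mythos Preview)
Your proposal is correct and follows essentially the same approach as the paper: invoke Lemma~\ref{lem:candidateEstimates} for $\est$, then Lemma~\ref{lem:agreementPhaseCorrectness} conditioned on $\est$, union-bound for correctness, and add the phase complexities from Lemmas~\ref{lem:candidateEstimates} and~\ref{lem:candidateAgreement} for the round and expected message bounds. Your explicit input-validity check (the case analysis on $q\in\{0,1\}$ using $r\in[0,1]$) is a detail the paper's proof leaves implicit, so if anything your argument is slightly more complete.
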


\begin{proof}
    We start with the correctness. First, note that by Lemma \ref{lem:candidateEstimates}, the estimation phase is successful with probability at least $1-1/(2n)$. In which case, by Lemma \ref{lem:agreementPhaseCorrectness}, the agreement protocol terminates with a correct (implicit agreement) output with probability at least $1-1/(2n)$. Hence, the protocol solves implicit agreement with probability at least $1-1/n$.
    
    Next, we bound the time and message complexities. First, by Lemma \ref{lem:candidateEstimates}, the estimation phase takes $\tilde{O}(1/\eps)$ rounds, and $\tilde{O}(1/\eps)$ messages in expectation.
    Second, by Lemma \ref{lem:candidateAgreement}, the agreement phase takes $\tilde{O}(n^{1/3+\gamma/2})$ rounds, and sends $\tilde{O}(n^{1/3-\gamma}+\eps \cdot n^{1/3+\gamma/2})$ messages in expectation. It suffices to add the complexities of the two phases together.
\end{proof}

By setting $\eps = 1/n^{1/5}$ and $\gamma = 2/15$, we can optimize the message complexity in the above statement. This yields a quantum implicit agreement protocol with expected message complexity $\tilde{O}(n^{1/5})$. In contrast, the best-known agreement protocol in the classical setting has a quadratically worse expected message complexity of $O(n^{2/5})$ \cite{AMP18}.

\begin{corollary}
    \quantumAgreement{} solves implicit agreement with probability at least $1-1/n$. Moreover, the protocol takes $\tilde{O}(n^{2/5})$ rounds and sends in expectation $\tilde{O}(n^{1/5})$ messages.
\end{corollary}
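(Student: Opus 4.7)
The plan is to derive the corollary as a direct instantiation of the preceding parameterized theorem by choosing $\eps$ and $\gamma$ so as to minimize the expected message bound $\tilde{O}(1/\eps + n^{1/3-\gamma} + \eps \cdot n^{1/3+\gamma/2})$, subject to the admissibility constraints $\eps \in [\Theta(1/n), 1/20]$ and $\gamma \in [0, 1/3]$. Since each of the three summands is monotone in $\eps$ or $\gamma$ (the first decreases in $\eps$, the second decreases in $\gamma$, and the third increases in both), the optimum will balance all three summands; this is the standard pattern for such two-parameter trade-offs.

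First, I would equate the three terms pairwise to set up the balancing. From $1/\eps = n^{1/3-\gamma}$ I get $\eps = n^{\gamma - 1/3}$, and from $n^{1/3-\gamma} = \eps\cdot n^{1/3+\gamma/2}$ I get $\eps = n^{-3\gamma/2}$. Setting the two expressions equal yields $\gamma - 1/3 = -3\gamma/2$, hence $\gamma = 2/15$, and consequently $\eps = n^{-1/5}$. Next, I would verify admissibility: $\gamma = 2/15 \in [0, 1/3]$, and $\eps = n^{-1/5} \in [\Theta(1/n), 1/20]$ for all sufficiently large $n$.

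Finally, I would substitute and simplify. The relevant quantities become $1/\eps = n^{1/5}$, $n^{1/3 + \gamma/2} = n^{1/3 + 1/15} = n^{2/5}$, $n^{1/3 - \gamma} = n^{1/3 - 2/15} = n^{1/5}$, and $\eps\cdot n^{1/3+\gamma/2} = n^{-1/5 + 2/5} = n^{1/5}$. Plugging these into the theorem gives round complexity $\tilde{O}(n^{1/5} + n^{2/5}) = \tilde{O}(n^{2/5})$ and expected message complexity $\tilde{O}(n^{1/5} + n^{1/5} + n^{1/5}) = \tilde{O}(n^{1/5})$. The success probability $1 - 1/n$ is inherited verbatim from the parent theorem, since neither the admissibility check nor the rewriting above introduces any new randomness or failure mode. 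There is no real obstacle here: the only thing to be careful about is checking the admissibility bounds so that the parent theorem's hypotheses genuinely apply at the chosen parameter pair.
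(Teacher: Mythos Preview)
Your proposal is correct and matches the paper's approach: the corollary is obtained by plugging $\eps = n^{-1/5}$ and $\gamma = 2/15$ into the parameterized theorem, exactly as you do. The paper simply states these parameter values without deriving them, whereas you additionally show how balancing the three message terms leads to this choice; otherwise the arguments are identical.
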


\section{Conclusion and Open Questions}

In this work, we presented new quantum distributed algorithms for leader election and
agreement that have message complexities that significantly improve over their respective
classical counterparts. To design and analyze these algorithms, we showed how  quantum algorithmic techniques such as Grover search, quantum counting, and quantum walks can be used in the distributed setting to achieve improvement in message complexity.  

Our work raises several key open questions. While our results use quantum subroutines initially developed to study sequential algorithms for time and query complexities, we were not able to export existing techniques for proving quantum query lower bounds to establish lower bounds on the message complexity of our protocols. Indeed, the reduction is more in the opposite direction, letting us export the existing framework to distributed computing. 

Nonetheless, we \emph{conjecture} all our protocols are \emph{tight}, except for quantum leader election in networks with mixing time $\tau$, for which we suspect that achieving a message complexity linear in $\tau$ may be possible.

We think that establishing non-trivial lower bounds on the quantum message complexity for leader election, and for specific network configurations, could require the development of new techniques that could be of broader interest, particularly for a better understanding of the limitations of quantum computing and quantum information in general.

Last, our improvements to message complexity come at the cost of a significantly increased number of rounds. Still, one can change our parameters to reduce the number of rounds and still get a message complexity that goes below the classical setting's best achievable message complexity.
It would be interesting to see how this trade-off could be improved.

\bibliographystyle{alpha}
\bibliography{biblio}

\appendix

\section{Supplementary Materials for Non-Oblivious Quantum Distributed Computing}

\subsection{Formal Model for Quantum Routing}\label{app:formal}
We now give a formal model for quantum routing while borrowing some notations of~\cite{ck19}.
Similar to the classical setting (see \Cref{sec:model}), each node $u$ has $\deg(v)$ ports, one for each node $v$ connected to $u$. To each such a port, $u$ holds two registers ${u\to v}$ and ${u\gets v}$. With some abuse of notations we denote a port $p$ by the pair $p=(u,v)$, meaning that this is a port based in $u$ for communicating with $v$. 

Below are the different possible states of the registers associated to the communication from $u$ to $v$, that is $u\to v$ from the viewpoint of $u$, and $v\gets u$ from the viewpoint of $v$.
Note that a node can be in a superposition of all those states. We denote by $\ket{\vac}$ a default state at the beginning of the round, representing the vacuum state, which is orthogonal to any possible message $\ket{m}$, so that one can distinguish between a request to send a message or a message has been received, and the absence of any message.
\begin{itemize}
\item End of round $i$:
\begin{itemize}
\item Register ${u\to v}$ of $u$: $\quad\begin{cases}\ket{\vac}_{u\to v},& \text{when no message to send to $v$}\\ \ket{m}_{u\to v},& \text{when message $\ket{m}$ needs to be sent to $v$}\end{cases}$
\item Register ${v\gets u}$ of $v$: $\quad\ket{\vac}_{v\gets u}$, \quad for preparing the potential reception \end{itemize}
\item Beginning of round $i+1$:
\begin{itemize}
\item Register ${u\to v}$ of $u$: $\quad\ket{\vac}_{u\to v}$, \quad since the message has been delivered
\item Register ${v\gets u}$ of $v$: $\quad\begin{cases}\ket{\vac}_{v\gets u},& \text{ when no message has been sent by $u$}\\ \ket{m}_{v\gets u},& \text{when message $\ket{m}$ has been sent by $u$}\end{cases}$
\end{itemize}
\end{itemize}

The $\send_{u\to v}$ operation models message emission from $u$ to $v$ by swapping register  ${u\to v}$ in $u$ with register  ${v\gets u}$ in $v$: 
$$\send_{u\to v} (\ket{m}_{u\to v}\ket{\vac}_{v\gets u})= \ket{\vac}_{u\to v}\ket{m}_{v\gets u}.$$

Then, the emission operator $\send_u$, which handles all messages from $u$, is simply defined as the tensor product of those operations for each port, and similarly for $\send$ for the emission of all messages sent across the network:
$$\send_u = \bigotimes_{p=(u,v)}  \send_{u\to v}  \quad \text{ and } \quad \send =\bigotimes_{u}\send_u.$$

\subsection{Example}  
\label{app:example}
Let us see how to send a message $\ket{m}$ to port (and therefore a node) selected according to the
superposition $\ket{\psi}=\sum_{p=(u,v)} \alpha_p \ket{p}$. 
Since some $\alpha_p$ could be $0$, this models a node selecting a subset of ports.
As we will see in \Cref{sec:quantization}, this encompasses the case of the randomized selection of one port. In particular, when only one port $p$ satisfies $\alpha_p\neq 0$, the selection is deterministic.
We show how to prepare the sending registers (1) and we explicit the action of the operator $\send_u$ (2).

Below, we put a subscript $u$ to indicate that the corresponding register is local to $u$ and different than its emission/reception registers.
We also only explicit the registers impacted by $\send_u$ . 
\begin{enumerate}
    \item Message preparation by $u$ (using unitaries called control-swap), where all registers are located in $u$:
\begin{multline*}
\Big(\ket{m}\otimes \Big(\sum_{p=(u,v)} \alpha_v \ket{v}\Big)\Big)_u\bigotimes_{p=(u,w)} \ket{\vac}_{u\to w} \\
\mapsto  \sum_{p=(u,v)} \alpha_v \Big( \ket{\vac} \otimes \ket{v} \Big)_u \otimes\ket{m}_{u\to v}\bigotimes_{p=(u,w\neq v)} \ket{\vac}_{u\to w}.
\end{multline*}
    \item Action of $\send_u$, where registers $v\gets u$ and $z\gets u$ belongs respectively to $v$ and $z$:
\begin{multline*}
\sum_{p=(u,v)} \alpha_v \Big( \ket{\vac} \otimes \ket{v} \Big)_u \otimes\ket{m}_{u\to v}\bigotimes_{p=(u,w\neq v)} \ket{\vac}_{u\to w} \bigotimes_{p=(z,u)} \ket{\vac}_{z\gets  u}\\
    \mapsto
    \sum_{p=(u,v)} \alpha_v \Big( \ket{\vac} \otimes \ket{v} \Big)_u 
    \bigotimes_{p=(u,w)} \ket{\vac}_{u\to w} \otimes \ket{m}_{v\gets u} \bigotimes_{p=(z\neq v,u)} \ket{\vac}_{z\gets  u}.
\end{multline*}
\end{enumerate}
In this example, one can see that, in each term of the superposition, the message $m$ has been sent to a single node $v$.

\section{Supplementary Materials for Distributed Quantum Subroutines}

\subsection{Quantization of (Distributed) Algorithms} \label{app:quantization}
We review and detail some of the arguments already presented in~\cite{LM18}. Indeed, we are in slightly more refined model because we now also measure the message complexity. It consists in transforming the classical algorithm, deterministic or randomized, instruction per instruction. Propagating those transformations leads to a fully quantum algorithm that is described by a unitary map.

First, for the communication part between each round, one way to model it is to assume that it consists of an exchange of registers, one of them being in a default state that we call a vacuum.

Second, for the local computation in each node, let us first review a simple transformation if we do not care about the computational power of the nodes. Observe that any local deterministic operation of a node is just a function from states to states. A randomized one is more generally a stochastic transformation that can be represented by a stochastic matrix $P$. The transformation $P$ can always be \emph{purified} into a unitary map $U$ as follows. The transformation introduces a second state register initialized to some default configuration, say the $0$-string. Then $U$ is defined as $\ket{x,y}$ to $\sum_{z} \sqrt{P_{xy}}\ket{x,y\oplus z}$, where $\oplus$ denotes the bit-wise XOR. Using $U$ with $y$ set to the $0$-string leads to the distribution we wanted if one observes the second register. We just have to postpone this observation to the end of the algorithm to keep the process unitary. Then, the computation continues with the second register, whereas the first one is kept to make the transformation unitary.

Last, let us add that such a transformation can in fact preserve the computational power of the nodes, while it is done at each local algorithmic instruction. This can even be optimized further
using now standard techniques coming back to the early age of reversible computing and initiated by Bennett~\cite{Bennett+SICOMP89}. This may lead to additional memory space, for instance, that is proportional to the running time. But other strategies exist depending on whether one would preserve either time or space complexity. In the context of quantum computing, there also exists several techniques for differing intermediate measurements; see, for instance,~\cite {GirishR22}.

\subsection{Example on a Star Graph}\label{ref:starexample}
Consider a star graph of $n+1$ nodes. Call the center node $u$, and the other nodes $X$. Each node gets in $X$ an input bit $b_v$ in $\{0,1\}$.

In the Searching problem, the center node $u$ wants to find a node $v$ with bit $1$, that is such that $b_v=1$. 
In the Counting problem, $u$ wants  to approximate the number of nodes $v$ with $b_v=1$.

\para{Searching.}
Classically, the message complexity is $\Theta(n)$ in $1$ round. This cannot be improved with more rounds. Instead, we will see that with more rounds, one can decrease the \emph{quantum} message complexity.

Define $f:X\to\{0,1\}$, such that $f(v)=b_v$.
Clearly, $\checking$ costs $2$ bits of communication and $2$ rounds: $1$ bit from $u$ to $v$ in order to ``ask'' as a return the bit $b_v$, which is the $2$nd bit of communication. Thus by \Cref{dgs}, the problem can be solved using $O(\sqrt{n})$ rounds and $O(\sqrt{n})$ bits of communication.

In order to reduce the number of rounds while increasing the message complexity, one could partition arbitrarily nodes of $X$ into buckets $X_1,X_2,\ldots,X_{\lceil n/k\rceil }$ of size at most $k$. Then, now $f(i)$ is the OR of the bits in $X_i$, and $\checking$ costs $2$ rounds and $2k$ bits of communication, as there are $\lceil n/k\rceil$ buckets.
Thus by \Cref{dgs}, finding a bucket with a $1$ can be solved using $O(\sqrt{n/k})$ rounds and $O(\sqrt{n/k}\times k)=O(\sqrt{nk})$ bits of communication. After which, finding the vertex with a $1$ within a bucket requires additional $k$ bits of communication and $1$ round.

\para{Counting.}
Classically, to approximate the number of nodes $v$ such that  $b_v=1$ with additive error $\eps n$,
 the message bits complexity is $O(1/\eps^2)$ in $1$ round, and this cannot be improved with more rounds.
Quantumly, with the same definition of $f$ as before, but using \Cref{cor:approxCount}, the center node $u$ can approximately count within $O(1/\eps)$ rounds and $O(1/\eps)$ message complexity.

\section{Randomized Tools}
\label{subsec:randomizedPrelims}

First, we provide Chernoff bounds~\cite{MitzenmacherUpfalBook} for the sum of independent Bernoulli random variables.

\begin{lemma}
\label{lem:chernoffBoundBernoulli}
Let $X$ be the sum of $k$ independent variables with $\{0,1\}$ value, with expectation $\E[X] = \mu$. Then, for any $\delta \geq 0$, $\Pr[X \geq (1+\delta) \mu] \leq \exp(-\frac{\delta^2 \mu}{(2+\delta)})$.
\end{lemma}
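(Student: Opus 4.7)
The plan is to prove this via the standard Chernoff bound technique based on the moment generating function (MGF), followed by a calculus inequality to convert the classical form into the specific $\exp(-\delta^2 \mu / (2+\delta))$ form stated. Let $X = \sum_{i=1}^k X_i$ where each $X_i$ is a Bernoulli variable with parameter $p_i$, so $\mu = \sum_i p_i$. I will first apply Markov's inequality to $e^{tX}$ for $t > 0$: for any threshold $a$, we have $\Pr[X \geq a] = \Pr[e^{tX} \geq e^{ta}] \leq \E[e^{tX}] / e^{ta}$. By independence, $\E[e^{tX}] = \prod_i \E[e^{tX_i}]$.

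Next I would bound each factor: since $X_i \in \{0,1\}$ with mean $p_i$, we have $\E[e^{tX_i}] = 1 + p_i(e^t - 1) \leq \exp(p_i(e^t - 1))$, using the elementary inequality $1+x \leq e^x$. Multiplying over $i$ gives $\E[e^{tX}] \leq \exp(\mu(e^t - 1))$. Setting $a = (1+\delta)\mu$ and optimizing over $t > 0$, the minimizer is $t = \ln(1+\delta)$, yielding the classical Chernoff bound
\[
\Pr[X \geq (1+\delta)\mu] \leq \left(\frac{e^\delta}{(1+\delta)^{1+\delta}}\right)^\mu = \exp\bigl(-\mu \bigl((1+\delta)\ln(1+\delta) - \delta\bigr)\bigr).
\]

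The main remaining step (and the only slightly nontrivial one) is to show $(1+\delta)\ln(1+\delta) - \delta \geq \delta^2/(2+\delta)$ for all $\delta \geq 0$, which would immediately yield the stated bound. I would prove this by defining $h(\delta) = (1+\delta)\ln(1+\delta) - \delta - \delta^2/(2+\delta)$, checking $h(0)=0$, and showing $h'(\delta) \geq 0$ for $\delta \geq 0$. Computing the derivative gives $h'(\delta) = \ln(1+\delta) - \delta(4+\delta)/(2+\delta)^2$, which again vanishes at $0$; a second differentiation reduces the inequality to a polynomial check showing $h''(\delta) \geq 0$, so $h'$ is nondecreasing on $[0,\infty)$, hence $h'(\delta) \geq h'(0) = 0$, and finally $h(\delta) \geq 0$ as desired.

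I expect the MGF bounding and Markov step to be entirely routine; the only potential obstacle is the analytic inequality in the last paragraph, which is a known but slightly fiddly calculation. An alternative, if one wants to avoid the calculus, is to appeal directly to the standard reference (e.g., \cite{MitzenmacherUpfalBook}) where this exact form is tabulated, since the lemma is stated as a citation to that textbook and used only as a black box.
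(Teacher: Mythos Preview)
Your proof is correct and complete: the MGF/Markov step is standard, and the analytic inequality $(1+\delta)\ln(1+\delta)-\delta \geq \delta^2/(2+\delta)$ does hold for all $\delta\geq 0$ (your sketch via $h''(\delta)=\tfrac{1}{1+\delta}-\tfrac{8}{(2+\delta)^3}\geq 0$ reduces to $(2+\delta)^3-8(1+\delta)=\delta(\delta^2+6\delta+4)\geq 0$, which is immediate).

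The only ``difference'' from the paper is that the paper does not prove this lemma at all: it is stated as a standard Chernoff bound with a citation to~\cite{MitzenmacherUpfalBook} and used as a black box. Your self-contained derivation is therefore strictly more than what the paper provides; the alternative you mention in your last sentence---simply invoking the textbook reference---is exactly what the paper does.
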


Given these Chernoff bounds, we can give the following observation on sampling in distributed networks, combined with an observation on choosing unique ranks. More concretely, suppose that all nodes in the (communication) network $G = (V,E)$ sample themselves with probability $p = (\pCst \ln n)/ n$, and also that they each choose a rank independently and uniformly at random in $\{1,\ldots,n^4\}$. (This assumes nodes know $n$. For a multiplicative constant approximation of $n$, the probability can be set accordingly.) 

\begin{fact}
\label{obs:sampleAndUniqueRanks}
    With probability at least $1-1/n^2$, it holds for any $n \geq 2$ both that
    \begin{itemize}
        \item at least one, and at most $\pMax \log n$, nodes are sampled,
        \item no two nodes choose the same rank.
    \end{itemize}
\end{fact}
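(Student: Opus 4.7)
The plan is to establish the two bullet points separately and then combine them by a union bound, showing the total failure probability is at most $1/n^2$.

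For the sampling bullet, I will use the Chernoff bound from Lemma C.1 applied to $X = \sum_{v \in V} X_v$, where $X_v$ is the indicator that node $v$ samples itself. Here $X$ is a sum of $n$ independent Bernoulli variables with parameter $p = (12 \ln n)/n$, so $\mu = \mathbb{E}[X] = 12 \ln n$. For the upper tail I would take $\delta = 1$ in Lemma C.1 to get
\[
\Pr[X \geq 24 \ln n] \leq \exp\!\left(-\tfrac{1 \cdot 12 \ln n}{3}\right) = n^{-4}.
\]
For the lower bound (at least one sampled node), the simplest route is a direct calculation rather than Chernoff:
\[
\Pr[X = 0] = (1-p)^n \leq e^{-pn} = e^{-12 \ln n} = n^{-12}.
\]
So the failure probability of the sampling claim is at most $n^{-4} + n^{-12} \leq 2 n^{-4}$.

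For the unique-ranks bullet, I will apply a union bound over unordered pairs of nodes. For any fixed pair $(u,v)$, since the two ranks are drawn independently and uniformly from $\{1,\ldots,n^4\}$, the probability they coincide is exactly $1/n^4$. Summing over the at most $\binom{n}{2} \leq n^2/2$ pairs gives a collision probability of at most $1/(2n^2)$.

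Finally, combining the two via another union bound, the overall failure probability is at most $2n^{-4} + 1/(2n^2)$, which is comfortably below $1/n^2$ for all $n \geq 2$. I expect no real obstacle here; the only minor care needed is to check that the chosen Chernoff parameter ($\delta = 1$, giving the constant $24 = \pMax$ in the statement) matches the constants $\pCst = 12$ and $\pMax = 24$ set in the preamble, and that the $n^{-4}$ term from Chernoff and the $1/(2n^2)$ term from the rank collisions both fit inside the declared $1/n^2$ slack.
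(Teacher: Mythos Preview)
Your proposal is correct and follows essentially the same approach as the paper: direct computation of $(1-p)^n\le e^{-pn}$ for the lower tail, the Chernoff bound of Lemma~C.1 with $\delta=1$ for the upper tail, and a union bound over the $\binom{n}{2}$ pairs for rank uniqueness. The only cosmetic difference is that the paper weakens each tail bound to $(1/4)n^{-2}$ so that the three terms sum exactly to $n^{-2}$, whereas you keep the sharper $n^{-4}$ and $n^{-12}$; note that at $n=2$ your total $2n^{-4}+1/(2n^2)$ equals $1/4=1/n^2$ only because of the slack in $n^{-4}+n^{-12}<2n^{-4}$, so ``comfortably below'' is a slight overstatement but the required inequality does hold.
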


\begin{proof}
    Recall that each node is sampled with probability $p = (\pCst \ln n) / n$, thus there are in expectation $\mu = p \, n = \pCst \ln n$ sampled nodes. Let us denote the number of sampled nodes by the random variable $X$. Since node are sampled independently, the probability that no node is sampled is at most $\Pr[X = 0] = (1-p)^n \leq \exp(- p \, n)$, where the inequality follows from the fact that for any $x \in \mathbb{R}$, $1+x \leq e^x$. Since $- \pCst \ln n \leq - \ln 4 - 2 \ln n$ for any $n \geq 2$, we get that no node is sampled with probability at most $(1/4) n^{-2}$. This lower bounds the number of sampled nodes. For the upper bound, a simple application of Chernoff bounds with $\delta = 1$ (see Lemma \ref{lem:chernoffBoundBernoulli}) shows that $\Pr[X \geq 2 \mu] \leq \exp(- \mu / 3)$. For any $n \geq 2$, it holds that $- \mu/3 \leq - \ln 4 - 2 \ln n$, thus  $\Pr[X \geq \pMax \ln n] \leq (1/4) n^{-2}$. 

    Finally, recall that all nodes choose ranks independently and uniformly at random in $\{1,\ldots,n^4\}$. Thus, the probability that any two nodes have the same rank is at most $n^{-4}$. Since there are at most $n^2/2$ unordered pairs of nodes, by a union bound we get that no two nodes have the same rank with probability at most $(1/2) n^{-2}$. Combining all three error probabilities adds up to an $n^{-2}$ error probability, and thus we get the lemma statement. 
\end{proof}

\end{document}